%v3, 25 pages on request of IJGMMP

\documentclass[11pt]{article}
\usepackage{amssymb}

%%%%%%%%%%%%%%%%%%%%%%%%%%%%%%%%%%%%%%%%%%%%%%%%%%%%%%%%%%%%%%%%%
\usepackage{graphicx}
\usepackage{amsmath}
\usepackage{makeidx}
\usepackage{indentfirst}

\newcounter{resultnum}[section]\setcounter{resultnum}{0}

\newcounter{conclusionnum}[section]\setcounter{conclusionnum}{0}

\newcounter{conditionnum}[section]\setcounter{conditionnum}{0}

\newcounter{conjecturenum}[section]\setcounter{conjecturenum}{0}

\newcounter{examplenum}[section]\setcounter{examplenum}{0}

\newcounter{exercisenum}[section]\setcounter{exercisenum}{0}
\newtheorem{lemma}{Lemma}[section]

\newcounter{lemmanum}[section]\setcounter{lemmanum}{0}

\newcounter{notationnum}[section]\setcounter{notationnum}{0}
\newtheorem{theorem}{Theorem}[section]

\newcounter{theoremnum}[section]\setcounter{theoremnum}{0}

\newcounter{definitionnum}[section]\setcounter{definitionnum}{0}
\newtheorem{corollary}{Corollary}[section]

\newcounter{corollarynum}[section]\setcounter{corollarynum}{0}

\newcounter{remarknum}[section]\setcounter{remarknum}{0}

\newcounter{propositionnum}[section]\setcounter{propositionnum}{0}

\newcounter{acknowledgementnum}[section]\setcounter{acknowledgementnum}{0}

\newcounter{algorithmnum}[section]\setcounter{algorithmnum}{0}

\newcounter{axiomnum}[section]\setcounter{axiomnum}{0}

\newcounter{casenum}[section]\setcounter{casenum}{0}

\newcounter{claimnum}[section]\setcounter{claimnum}{0}

\newcounter{summarynum}[section]\setcounter{summarynum}{0}

\newcounter{problemnum}[section]\setcounter{problemnum}{0}
\newenvironment{proof}[1][]{\textbf{Proof.} }{}

\begin{document}

\title{Metric Compatible  or Noncompatible\\ Finsler--Ricci Flows}
\date{April 14, 2012}
\author{\textbf{Sergiu I. Vacaru} \thanks{%
sergiu.vacaru@uaic.ro,\ http://www.scribd.com/people/view/1455460-sergiu
\newline
All Rights Reserved \copyright \ 2011 \ Sergiu I. Vacaru} \and \textsl%
{\small Science Department, University "Al. I. Cuza" Ia\c si, } \\
%EndAName
\textsl{\small 54, Lascar Catargi street, Ia\c si, Romania, 700107 } }
\maketitle

%%%%%%
\begin{abstract}
There were elaborated different models of Finsler geometry using the Cartan
(metric compatible), or Berwald and Chern (metric non--compatible)
connections, the Ricci flag curvature etc. In a series of works, we studied
 (non)commutative metric compatible Finsler and nonholonomic
generalizations of the Ricci flow theory
[see S. Vacaru,\ J. Math. Phys. \textbf{49} (2008) 043504;\
\textbf{50} (2009) 073503 and references therein]. The goal of this
work is to prove that there are some models of Finsler gravity and
geometric evolution theories with generalized Perelman's functionals,
and correspondingly derived nonholonomic Hamilton evolution
equations, when metric noncompatible Finsler connections are
involved. Following such an approach, we have to consider distortion
tensors, uniquely defined by the Finsler metric, from the Cartan and/or
the canonical metric compatible connections. We conclude that,
in general, it is not possible to elaborate self--consistent models
of geometric evolution with arbitrary Finsler metric noncompatible connections.
\end{abstract}

%\tableofcontents

\section{Motivation and Introduction}

Geometric analysis and evolution equations are important topics of research
in modern mathematics and physics, see original R. Hamilton's \cite%
{ham1,ham2} and G. Perelman's \cite{gper1,gper2,gper3} works and reviews of
results in \cite{caozhu,kleiner,rbook}. In 2007, it was published a
communication at a Conference in memory of M. Matsumoto (at Sapporo, in
2005), where D. Bao \cite{bao1} mentioned that the idea to study such
problems related to Finsler geometry came to S. -S. Chern in 2004.
Unfortunately, the famous mathematician had not published  his
proposals/results on a Finsler--Ricci flow theory.\footnote{%
It was one--two years after famous Grisha Perelman's electronic preprints
containing the proof of the Thurston/ Poincar\`{e} conjecture were put in
arXiv.org. That induced a number of papers  on geometric flows
and applications related to various branches of mathematics, physics,
optimization etc. I'm grateful to D. Bao and E. Peyghan for important
correspondence, historical remarks about S. Chern original ideas, and
discussions on Finsler--Ricci flows and almost K\"{a}hler models of Finsler
geometry and generalizations.}

In May-June, 2005, there were a series of lectures of N. Higson at Madrid,
Spain, where the R. Hamilton and G. Perelman fundamental contributions in
mathematics were discussed with respect to possible applications in modern
gravity, cosmology and astrophysics. The author of this paper attended one
of those lectures at CSIC, Madrid. At that time, he worked in some
directions of nonholonomic mechanics and Finsler geometry and geometric
methods of constructing exact solutions in Einstein gravity and
modifications. He knew that Chern's connection in Finsler geometry is metric
noncompatible which gives rise to a number of difficulties for applications
related to standard theories of physics (see discussions in \cite%
{vrev1,vcrit,vaxiom,vsgg}; we also mention here some most important
monographs on Finsler geometry \cite{cartan,matsumoto,ma,bejancu,bcs}). It
is obvious that a general extension of the Hamilton--Perelman theory for
metric noncompatible spaces, including Finsler models, is not possible. If $%
\mathbf{Dg}\neq 0$ for a  metric $\mathbf{g} $ and a
linear connection $\mathbf{D}$ (such geometric objects may be Finsler or other types), the evolution of geometric objects on a real
parameter $\chi$ can not be determined only by a Ricci tensor (see relevant
formulas on next page and rigorous definitions in sections \ref{s3} and \ref%
{s4}).

In our works, we preferred to use the Cartan connection and metric
compatible modifications and generalizations of Finsler geometry because the
geometric constructions and proofs of the main results are quite similar to
those for Riemannian spaces but for some special classes of Finsler
connections. A series of results were developed for the theory of
nonholonomic Ricci flows (with additional non--integrable constraints) for
certain classes of Einstein, Finsler, Lagrange and other nonholonomic,
noncommutative, nonsymmetric, fractional and stochastic spacetimes and
geometries \cite%
{vric1,vric2,vric3,vric4,vric5,vric6,vric7,vric8,vric9,vric10}.

The problem of Ricci flows and Finsler geometry was considered again in a
recent paper \cite{peyghan}\footnote{%
I thank E. Peyghan for sending two preliminary versions of their work before
the authors would publish the results in a preprint or journal version},
where Finsler--Ricci flow type evolution equations are studied following D.
Bao's heuristic proposals related to geometric flows and Finsler
geometry. In such a case, even the Berwald connection (which is also metric
noncompatible) is involved, the constructions may be associated to the
Cartan metric compatible connection. A new definition/type of the Ricci
tensor \cite{akbar,bao1,bcs} which is symmetric and seem to provide an
alternative approach to formulating Finsler like gravity and Ricci flow
theories is considered. Such results are original and important.
Nevertheless, the geometric evolution equations with right side Ricci flag
curvature postulated in the mentioned works (by D. Bao and A. Tayebi and
E. Peyghan) were not derived from certain generalized Perelman's functionals.
It was  not clear if such equations may describe an evolution gradient process
(we shall prove this in the present paper, as a particular case). We also
note that it was not stated if, and when, the models of Finsler--Ricci
flows with flag curvature may have certain limits to standard Laplacian
operators and Levi--Civita configurations  - this would be an important
argument that such theories may describe well--defined evolution processes.

In this work we extend our former results on Finsler--Ricci flows for metric
compatible connections in a more general context when metric noncompatible
Finsler connections (like the Berwald and Chern ones) are used for
nonholonomic deformations of Perelman's functionals. We shall analyze possible
relations to former results on nonholonomic Ricci flows and
Lagrange--Finsler evolution models via Cartan type (metric compatible)
connections which positively describe geometric evolution processes in a
self--consistent and similar manner to the Ricci flow theory on Riemannian
manifolds.

R. S. Hamilton`s evolution equations were postulated for real Riemannian
manifolds \cite{ham1,ham2} following heuristic arguments,
 $$\frac{\partial g_{ij}}{\partial \chi } =-2Ric_{ij}, \qquad g_{ij}\mid _{\chi
=0} =\ ^{\circ }g_{ij}(x^{k}).$$
In these equations, geometric flows of metrics $g_{ij}(\chi ,x^{k})$ are
considered for a real parameter $\chi $ on a manifold $M$ when local
coordinates $x^{k}$ are labeled by indices $i,j,...=1,2,..., n=dimM$. The
Ricci tensor $Ric_{ij}$ in defined by the Levi--Civita connection $\nabla $
of $g_{ij}$ (for our purposes, it is enough to work with non--normalized
flows).

For a Finsler fundamental/generating function\footnote{%
see definitions and details in next section}, $F(x^{k},y^{a}),$ we can
consider
\begin{equation}
\ ^{v}{\tilde{g}}_{ij}=\frac{1}{2}\frac{\partial ^{2}F^{2}}{\partial
y^{i}\partial y^{j}}  \label{hessian}
\end{equation}%
as a ''vertical'' (v) metric on typical fiber if $\det |\ ^{v}{\tilde{g}}%
_{ij}|\neq 0.$\footnote{%
On $TM,$ we can identify the horizontal, $h$, and $v$--indices, i.e. $%
i,j,... $ and $a,b,...)$. In our work, left ''up'' and ''low'' indices are
used as labels, for instance, ''F'' being associated to ''Finsler'' etc. We
cite the monographs \cite{cartan,matsumoto,ma,bejancu,bcs} on main Finsler
geometry methods and comprehensive bibliography and our papers \cite%
{vrev1,vcrit}, for critical remarks, principles and perspectives of
applications in modern physics, cosmology and geometric mechanics. We
suggest readers to consult such works for reviews of results and notation
conventions.} Following a formal analogy to Hamilton's works, but for $\ ^{v}%
{\tilde{g}}_{ij}$ on tangent bundle $TM,$ we can postulate certain evolution
equations of type
\begin{equation}
\frac{\partial \ ^{v}{\tilde{g}}_{ij}}{\partial \chi }\sim \ ^{F}Ric_{ij}, \label{riccifinslerheur}
\end{equation}
where $\ ^{F}Ric_{ij}$ is a variant of Ricci tensor constructed for a model
of Finsler geometry and flows/evolution of fundamental Finsler functions are
parametr\-ized by $F(\chi ,x^{k},y^{a}).$ A heuristic definition of $\
^{F}Ric_{ij}$ is related to an important question if a chosen Finsler type
Ricci tensor would limit, or not, a Laplacian operator $\ ^{F}\Delta $
derived in metric compatible form for a Finsler geometry model. The answer
is affirmative for Laplacians determined by the Levi--Civita and/or Cartan
connections but not for models of Finsler geometry when $\ ^{F}Ric_{ij}$ is
introduced in a "nonstandard" form, or using a general metric noncompatible
Finsler connection. In \cite{akbar,bao1,peyghan}, the problem if and how a
Laplacian $\ ^{F}\Delta $ may be associated naturally to the Ricci flag (and
Akbar--Zadeh's) curvature and geometric flows was not analyzed.

The goal of this paper is to prove that (nonholonomically constrained)
Finsler--Ricci flow evolution equations and corresponding $\ ^{F}Ric_{ij}$
can be derived for some classes of metric noncompatible Finsler connections
and/or Akbar--Zadeh's Ricci curvature. If such geometric objects are
determined in unique forms (up to frame/coordinate transforms) by respective
distortion tensors which, in their turns, are also completely defined by a
Finsler fundamental function, we can formulate well defined Finsler
evolution theories. In our approach, we use our former results and techniques
elaborated for the models of geometric Finsler evolution with the Cartan
connection and certain metric compatible generalizations \cite%
{vric3,vric4,vric6}. Such constructions are very similar to those for
Riemannian spaces but derived with respective Finsler connections and
adapted frames. This allows us to define certain generalized Perelman
functionals and associated entropy and thermodynamical type values and
derive Hamilton type evolutions equations.

The paper is organized as follows. In Section \ref{s2} we survey the most
important geometric constructions and the basic language on metric
compatible Finsler spaces. In Section \ref{s3} there are defined the
fundamental geometric objects for metric noncompatible Finsler spaces
(using distortions from compatible ones) and
provided the most important formulas for Einstein--Finsler spaces.
The material outlined in the first three sections is oriented to
non--experts on Finsler geometry but researchers on geometric
analysis and mathematical physics. Perelman's
functionals are defined for special classes of metric noncompatible
Finsler spaces in Section \ref{s4}. There are proven main theorems on
Finsler--Ricci flows and nonholonomic, in general, metric
noncompatible geometric evolution equations. We also speculate
on statistical analogy and thermodynamics for Finsler--Ricci flows.

{\bf Acknowledgement:}\  I'm grateful to D. Bao and E. Peyghan for interest, discussions  and
correspondence on Finsler--Ricci flows. The research in this paper is partially supported by
the Program IDEI, PN-II-ID-PCE-2011-3-0256

\section{Metric Compatible Finsler Geometries}
\label{s2} In this section, we provide an introduction and analyze some
common features and differences of (pseudo) Riemannian and metric compatible
Finsler geometry models (proofs are omitted, see details in Refs. \cite%
{vrev1,vcrit,vsgg}).  In section \ref{s3}, we shall analyze the most important
formulas for metric compatible and noncompatible Finsler geometry models.
We emphasize that in Ricci flow theories, it is convenient to work both with global and coordinate/index formulas and equations. Some historical remarks will be presented in order to explain the most important ideas and results in Finsler geometry and related evolution/gravity theories.

\subsection{Finsler and Riemannian metrics}

Let $M$ be a real $\mathit{C}^{\infty }$ manifold, $\dim M=n$, and denote by $TM$ its tangent bundle. Denoting by $T_{x}M$ the tangent spaces at $x\in M,$ we have $TM=\bigcup\nolimits_{x\in M}T_{x}M.$

A Finsler fundamental/generating function (metric) is a function $F:\
TM\rightarrow \lbrack 0,\infty)$ subjected to the conditions:

\begin{enumerate}
\item $F(x,y)$ is $\mathit{C}^{\infty }$ on $\widetilde{TM}:=TM\backslash
\{0\},$ where $\{0\}$ denotes the set of zero sections of $TM$ on $M;$

\item $F(x,\beta y)=\beta F(x,y),$ for any $\beta >0,$ i.e. it is a positive
1--homogeneous function on the fibers of $TM;$

\item $\forall y\in \widetilde{T_{x}M},$ the Hessian $\ ^{v}{\tilde{g}}_{ij}$
(\ref{hessian}) is nondegenerate and positive definite\footnote{%
this condition should be relaxed for models of Finsler gravity with finite,
in general, locally anisotropic speed of light \cite{vrev1,vcrit}}.
\end{enumerate}

The term ''metric'' for $F$ is used in Finsler geometry because it defines
on $TM$ a nonlinear quadratic element%
\begin{equation}
ds^{2}=F^{2}(x,dx)  \label{nqe}
\end{equation}%
for $dx^{i}\sim y^{i}.$ The well--known and very important example of
(pseudo) Riemannian geometry, determined by a metric tensor $g_{ij}(x^{k})$,
is a particular case with quadratic form $F=\sqrt{|g_{ij}(x)y^{i}y^{j}|}$
when
\begin{equation}
ds^{2}=g_{ij}(x)dx^{i}dx^{j}  \label{lqe}
\end{equation}%
and the signature of $g_{ij}$ is of type $(+,+,+,+),$ or $(+,+,+,-),$ for
corresponding space like, or spacetime, manifolds. It should be noted that
the condition (\ref{lqe}) allows us to identify the fiber of $TM$ with a
flat (pseudo) Euclidean space, respectively, Minkowski spacetime, in any
point $x\in M.$ The tangent spaces $T_{x}M$ are considered in (pseudo)
Riemannian geometry on $M$ in order to define geometrically tensors and
forms by analogy to flat spaces. For instance, a vector $A=\{A_{i}(x)\}\in
TM $ in any system of reference/coordinates, has coefficients $A_{i}(x)$
depending only on base coordinates $x^{k}$ but not on $y^{a}.$ The
fundamental geometric objects (for instance, the Levi--Civita connection $%
\nabla $ and respective curvature tensor and Ricci tensor) are completely
and uniquely determined by a metric tensor $\ ^{h}g=\{g_{ij}(x)\}$ following
the condition of metric compatibility and zero torsion. This is a result of the ''quadratic''
condition (\ref{lqe}) when, in general, geometric and/or gravity theory
models based on (pseudo) Riemannian geometry, and various
Einstein/Riemann--Cartan or metric--affine generalizations, are for
geometric/physical objects depending only on $x$--coordinates. Any given
(pseudo) Riemannian metric structure naturally generates a unique chain
  $\ ^{h}g(x)\rightarrow \nabla (x)\rightarrow \ ^{\nabla }\mathcal{R}%
(x)\rightarrow \ ^{\nabla }Ric(x)$
following well--defined geometric rules. The "standard" theory of Ricci flows %
\cite{ham1,ham2,gper1,gper2,gper3} was formulated for (pseudo) Riemannian)
metrics $g_{ij}(\chi, x)$ depending on a real flow parameter $\chi $ (for
simplicity, we omit details on geometric flows of (almost) K\"{a}hler
geometries).

If a Finsler metric $F$ is generic nonlinear, the problem of constructing
geometric models on $TM$ became more sophisticate. Any relation of type (\ref%
{nqe}) for a class of correspondingly defined functions $F$ allows us to
study various metric properties of $T_{x}M$ and, in general, of $TM,$
including fiber constructions, using $\ ^{v}{\tilde{g}}_{ij}(x,y)$ (\ref%
{hessian}) and its possible projections, conformal transforms etc. For
instance, it is well known that B. Riemann in his famous thesis \cite%
{riemann} considered the first example of Finsler metric with nonlinear
quadratic elements (see historical remarks and references in \cite%
{cartan,matsumoto,ma,bejancu,bcs}; that why the term Riemann--Finsler
geometry was introduced in modern literature) even he elaborated a complete
geometric model only for Riemannian spaces. Nevertheless, to know the metric
properties is not enough for constructing a complete geometric model on $TM$
for a given $F$ and respective $\ ^{v}{\tilde{g}}_{ij}.$ We need more
assumptions, for instance, how we chose to define connections naturally
determined by $F$ because for generic Finsler metrics there is not a unique
analog of the Levi--Civita connection.

\subsection{Cartan--Finsler geometry}

The first complete geometric model of Finsler geometry is due to E. Cartan %
\cite{cartan}. Roughly speaking, the Cartan--Finsler geometry is a
variant of the well known Riemann--Cartan one, with nonzero torsion,
but constructed on $TM$ in a form when all geometric objects are
generated by $F$ following the conditions of metric compatibility
and vanishing of "pure" horizontal and vertical components of
torsion. Here we note that the Cartan--Finsler
torsion  is different from that used, for instance, in Einstein--Cartan
gravity when torsion is considered as an additional (to metric) tensor
field for which additional (algebraic) field equations are introduced. For the
Cartan--Finsler model, the torsion field is completely determined by metrics
$F$ and $\ ^{v}{\tilde{g}}_{ij}[F],$ when (at least, in principle) a complete
metric $\ ^{F}\mathbf{g}$ can be constructed on total  $TM$ following certain well defined geometric principles.

\subsubsection{The canonical N--connection, adapted frames and metrics}

Nevertheless, the Cartan--Finsler space is not only a Riemann--Cartan
geometry on $TM$ with metric tensor and metric compatible connection with
torsion (all induced by $F$). This is also an example of nonholonomic
manifold/bundle space when the geometric objects are adapted to a
non--integrable distribution on $TM$ induced by $F$ in such a form that
canonical semi--spray and nonlinear connection (N--connection) structures
are defined. In the mentioned first monograph on Finsler geometry \cite%
{cartan}, the concept of N--connection is considered in coordinate form (the
first global definitions are due to Ehresmann \cite{ehresmann} and A.
Kawaguchi \cite{kawaguchi1,kawaguchi2}, see details in \cite{ma} and, for
the Einstein gravity and generalizations, in \cite{vrev1,vsgg}). Let us
analyze, in brief, such constructions. A N--connection $\mathbf{N}$ can be
defined as a non--integrable (there are used equivalent terms like
nonholonomic and/or anholonomic) distribution
\begin{equation}
TTM=hTM\oplus vTM  \label{whitneyt}
\end{equation}%
into conventional horizontal (h) and vertical (v) subspaces\footnote{%
In our works, we use ''boldface'' symbols for spaces and geometric objects
endowed/adapted to N--connection structure.}. Locally, such a geometric
object is determined by its coefficients $\{N_{i}^{a}\}$, when $\mathbf{N=}%
N_{i}^{a}(u)dx^{i}\otimes \partial /\partial y^{a}$, and characterized by
its curvature (Neijenhuis tensor)
  $\mathbf{\Omega }=\frac{1}{2}\Omega _{ij}^{a}\ d^{i}\wedge d^{j}\otimes
\partial _{a},$ %
with coefficients%
\begin{equation}
\Omega _{ij}^{a}=\frac{\partial N_{i}^{a}}{\partial x^{j}}-\frac{\partial
N_{j}^{a}}{\partial x^{i}}+N_{i}^{b}\frac{\partial N_{j}^{a}}{\partial y^{b}}%
-N_{j}^{b}\frac{\partial N_{i}^{a}}{\partial y^{b}}.  \label{ncurv}
\end{equation}

In Cartan--Finsler geometry, the N--connection is canonically determined by $%
F $ following a geometric/variational principle: The value $L=F^{2}$ is
considered as an effective regular Lagrangian on $TM$ and action integral
$$S(\tau )=\int\limits_{0}^{1}L(x(\tau ),y(\tau ))d\tau ,\mbox{ for }%
y^{k}(\tau )=dx^{k}(\tau )/d\tau ,$$%
for $x(\tau )$ parametrizing smooth curves on a manifold $M$ with $\tau \in
\lbrack 0,1].$ The Euler--Lagrange equations $\frac{d}{d\tau }\frac{\partial
L}{\partial y^{i}}-\frac{\partial L}{\partial x^{i}}=0$ are equivalent to
the ''nonlinear geodesic'' (equivalently, semi--spray) equations $\frac{%
d^{2}x^{k}}{d\tau ^{2}}+2\tilde{G}^{k}(x,y)=0$, where
\begin{equation}
\tilde{G}^{k}=\frac{1}{4}\tilde{g}^{kj}\left( y^{i}\frac{\partial ^{2}L}{%
\partial y^{j}\partial x^{i}}-\frac{\partial L}{\partial x^{j}}\right) ,
\label{smspr}
\end{equation}%
for $\tilde{g}^{kj}$ being inverse to $\ ^{v}{\tilde{g}}_{ij}\equiv {\tilde{g%
}}_{ij}$ (\ref{hessian}), defines the canonical N--connection $\ $%
\begin{equation}
\tilde{N}_{j}^{a}:=\frac{\partial \tilde{G}^{a}(x,y)}{\partial y^{j}}.
\label{cncl}
\end{equation}

A fundamental Finsler function $F(x,y)$ induces naturally a N--adapted frame
structure (defined linearly by $\ \tilde{N}_{j}^{a}$), $\mathbf{\tilde{e}}%
_{\nu }=(\mathbf{\tilde{e}}_{i},e_{a}),$ where
\begin{equation}
\mathbf{\tilde{e}}_{i}=\frac{\partial }{\partial x^{i}}-\tilde{N}_{i}^{a}(u)%
\frac{\partial }{\partial y^{a}}\mbox{ and
}e_{a}=\frac{\partial }{\partial y^{a}},  \label{dder}
\end{equation}%
and the dual frame (coframe) structure is $\mathbf{\tilde{e}}^{\mu }=(e^{i},%
\mathbf{\tilde{e}}^{a}),$ where
\begin{equation}
e^{i}=dx^{i}\mbox{ and }\mathbf{e}^{a}=dy^{a}+\tilde{N}_{i}^{a}(u)dx^{i}.
\label{ddif}
\end{equation}%
There are satisfied nontrivial nonholonomy relations
\begin{equation}
\lbrack \mathbf{\tilde{e}}_{\alpha },\mathbf{\tilde{e}}_{\beta }]=\mathbf{%
\tilde{e}}_{\alpha }\mathbf{\tilde{e}}_{\beta }-\mathbf{\tilde{e}}_{\beta }%
\mathbf{\tilde{e}}_{\alpha }=\tilde{W}_{\alpha \beta }^{\gamma }\mathbf{%
\tilde{e}}_{\gamma }  \label{anhrel}
\end{equation}%
with (antisymmetric) nontrivial anholonomy coefficients $\tilde{W}%
_{ia}^{b}=\partial _{a}\tilde{N}_{i}^{b}$ and $\tilde{W}_{ji}^{a}=\tilde{%
\Omega}_{ij}^{a}.$ This is a reason to say that a Finsler geometry is a
nonholonomic one when $F$ defines a ''preferred'' frame structure on $TM$.%
\footnote{%
Such a N--adapted frame system of reference does not prohibits us to
consider arbitrary frame and coordinate transforms on $TM$.} If a generating
function $F$ is of particular quadratic type (\ref{lqe}), the values $%
\tilde{N}_{j}^{a},$ $\mathbf{\tilde{e}}_{\alpha }$ and $\tilde{W}_{\alpha
\beta }^{\gamma }$ can be parametrized in some forms not depending
explicitly on $y^{a}.$ In such cases, $\mathbf{\tilde{e}}_{\alpha }$ can be
arbitrary frames not depending on a ''degenerate'' Finsler, i.e. on a
(pseudo) Riemannian metric $g_{ij}(x).$

Using data $\left( {\tilde{g}}_{ij},\mathbf{\tilde{e}}_{\alpha }\right) ,$
we can define a canonical (Sasaki type) metric structure on $\widetilde{TM},$%
\begin{equation}
\mathbf{\tilde{g}}=\tilde{g}_{ij}(x,y)\ e^{i}\otimes e^{j}+\tilde{g}%
_{ij}(x,y)\ \mathbf{\tilde{e}}^{i}\otimes \ \mathbf{\tilde{e}}^{j}.
\label{slm}
\end{equation}%
It is possible to use other geometric principles for ''lifts and
projections'' with $\ ^{v}{\tilde{g}}_{ij}$ on the typical fiber, when from
a given $F$ it is constructed a metric on total/horizontal spaces of $TM$.
Nevertheless, for models of locally anisotropic/Finsler gravity on $TM,$
with a generalized covariance principle, such constructions are equivalent
up to certain frame/coordinate transforms $\mathbf{\tilde{e}}_{\gamma
}\rightarrow \mathbf{e}_{\gamma ^{\prime }}=e_{\ \gamma ^{\prime }}^{\gamma }%
\mathbf{\tilde{e}}_{\gamma }.$ In such cases, we can omit ''tilde'' on
symbols and write, in general, $\mathbf{g=\{g}_{\alpha \beta }\mathbf{\}}$
and $\mathbf{N=\{}N_{i}^{a}=e_{\ a^{\prime }}^{a}e_{i}^{\ i^{\prime
}}N_{i^{\prime }}^{a^{\prime }}\}.$ There is a subclass of transforms
preserving a prescribed splitting (\ref{whitneyt}).

We note that in Finsler geometry and generalizations there are used terms
like distinguished tensor/ metric/ connection etc (in brief, d--tensor,
d--metric, d--connection) for geometric objects adapted to N--connec\-ti\-on
splitting when coefficients are computed with respect to frames of type (\ref%
{dder}) and (\ref{ddif}). For instance, a d--vector $\mathbf{X=(\ }^{h}X,%
\mathbf{\ }^{v}X\mathbf{)=}X^{i}\mathbf{\tilde{e}}_{i}+X^{a}e_{a}.$

\subsubsection{Torsion and curvature of d--connections}

For any d--metric $\mathbf{\tilde{g}}$ (\ref{slm}), we may construct in
standard form, on $TM$, its Levi--Civita connection $\tilde{\nabla}.$
Nevertheless, such a linear connection is not used in Finsler geometry
because it is not adapted to the N--connection structure $\mathbf{N.}$ This
motivates the definition of a new class of linear connections.

A distinguished connection (d--connection) $\mathbf{D}$ on $TM$ is a linear
connection conserving under parallelism the Whitney sum (\ref{whitneyt}).
For any $\mathbf{D}$, there is a decomposition into h-- and v--covariant
derivatives,
$$\mathbf{D}_{\mathbf{X}}\mathbf{\doteqdot X}\rfloor \mathbf{D=\ }^{h}X\rfloor
\mathbf{D+}\ \mathbf{\ }^{v}X\rfloor \mathbf{D=}D_{\mathbf{\ }^{h}X}+D_{%
\mathbf{\ }^{v}X}=\mathbf{\ }^{h}D_{X}+\mathbf{\ }^{v}D_{X},$$
where ''$\rfloor "$ denotes the interior product.

The torsion of a d--connection $\mathbf{D}$ is defined in standard from by
d--tensor field
\begin{equation}
\mathcal{T}(\mathbf{X},\mathbf{Y}):= \mathbf{D}_\mathbf{X}\mathbf{Y}-
\mathbf{D}_\mathbf{Y}\mathbf{X}-[\mathbf{X},\mathbf{Y}],  \label{tors1}
\end{equation}%
for which a N--adapted $h$-$v$--decomposition is possible, $\mathcal{T}(\mathbf{X},\mathbf{Y})=T(\ ^hX,\ ^hY)+ T(\ ^hX, ^vY)+T(\ ^vX,\
^hY)+T(\ ^vX,\ ^vY).$
The curvature of a d--connection $\mathbf{D}$ is
\begin{equation}
\mathcal{R}(\mathbf{X},\mathbf{Y}):= \mathbf{D}_\mathbf{X}\mathbf{D}_\mathbf{%
Y}- \mathbf{D}_\mathbf{Y}\mathbf{D}_\mathbf{X}-\mathbf{D}_\mathbf{[X,Y]}
\label{curv1}
\end{equation}%
for any d--vectors $\mathbf{X,Y,}$ with a corresponding $h$--$v$%
--decomposition (for simplicity, we omit such formulas) .

The N--adapted components $\mathbf{\Gamma }_{\ \beta \gamma }^{\alpha }$ of
a d--connection $\mathbf{D}_{\alpha }=(\mathbf{e}_{\alpha }\rfloor \mathbf{D}%
)$ are computed following equations $\mathbf{D}_{\alpha }\mathbf{e}_{\beta }=%
\mathbf{\Gamma }_{\ \alpha \beta }^{\gamma }\mathbf{e}_{\gamma }$, or $%
\mathbf{\Gamma }_{\ \alpha \beta }^{\gamma }\left( u\right) =\left( \mathbf{D%
}_{\alpha }\mathbf{e}_{\beta }\right) \rfloor \mathbf{e}^{\gamma }$.
Respective splitting into h-- and v--covariant derivatives are given by
$h\mathbf{D}=\{\mathbf{D}_{k}=\left( L_{jk}^{i},L_{bk\;}^{a}\right) \}$
 and $\ v\mathbf{D}=\{\mathbf{D}_{c}=\left( C_{jc}^{i},C_{bc}^{a}\right) \}$
where,
 $L_{jk}^{i}=\left( \mathbf{D}_{k}\mathbf{e}_{j}\right) \rfloor e^{i},$ $
L_{bk}^{a}=\left( \mathbf{D}_{k}e_{b}\right) \rfloor \mathbf{e}%
^{a},~C_{jc}^{i}=\left( \mathbf{D}_{c}\mathbf{e}_{j}\right) \rfloor
e^{i},\quad C_{bc}^{a}=\left( \mathbf{D}_{c}e_{b}\right) \rfloor \mathbf{e}%
^{a}.$
A set \\  $\mathbf{\Gamma }_{\ \alpha \beta }^{\gamma }=\left(
L_{jk}^{i},L_{bk}^{a},C_{jc}^{i},C_{bc}^{a}\right) $ completely define a
d--connection $\mathbf{D}$ on $TM$ enabled with N--connection structure.

The simplest way to perform computations with d--connections is to use
N--adapted differential forms. The d--connection 1--form is $\mathbf{\Gamma }%
_{\ \beta }^{\alpha }=\mathbf{\Gamma }_{\ \beta \gamma }^{\alpha }\mathbf{e}%
^{\gamma }$. For instance, the $h$--$v$--coefficients $\mathbf{T}_{\ \beta
\gamma }^{\alpha }=\{T_{\ jk}^{i},T_{\ ja}^{i},T_{\ ji}^{a},T_{\
bi}^{a},T_{\ bc}^{a}\}$ of torsion $\mathbf{T}$ (\ref{tors1}) are computed
using formulas $
\mathcal{T}^{\alpha }:=\mathbf{De}^{\alpha }=d\mathbf{e}^{\alpha }+\mathbf{%
\Gamma }_{\ \beta }^{\alpha }\wedge \mathbf{e}^{\beta }.$
We obtain
{\small
\begin{eqnarray}
T_{\ jk}^{i} &=& L_{\ jk}^{i}-L_{\ kj}^{i},\ T_{\ ja}^{i}=-T_{\ aj}^{i}=C_{\
ja}^{i},\ T_{\ ji}^{a}=\Omega _{\ ji}^{a}, \nonumber \\
T_{\ bi}^{a} &=& \frac{\partial N_{i}^{a}}{\partial y^{b}}-L_{\ bi}^{a},\
T_{\ bc}^{a}=C_{\ bc}^{a}-C_{\ cb}^{a}.  \label{dtors}
\end{eqnarray}%
}
Similarly, we can compute the N--adapted components $\mathbf{R}_{\ \beta
\gamma \delta }^{\alpha }$ of  curvature  (\ref{curv1}), \begin{equation}
\mathcal{R}_{~\beta }^{\alpha }\doteqdot \mathbf{D\Gamma }_{\ \beta
}^{\alpha }=d\mathbf{\Gamma }_{\ \beta }^{\alpha }-\mathbf{\Gamma }_{\ \beta
}^{\gamma }\wedge \mathbf{\Gamma }_{\ \gamma }^{\alpha }=\mathbf{R}_{\ \beta
\gamma \delta }^{\alpha }\mathbf{e}^{\gamma }\wedge \mathbf{e}^{\delta }.
\label{curv1a}
\end{equation}%
For simplicity, we omit formulas for an explicit $h$--$v$--parametrization
of $\mathbf{R}_{\ \beta \gamma \delta }^{\alpha },$ see details in Refs. %
\cite{vrev1,vsgg,ma}.

There is a sub--class of d--connections $\mathbf{D}$ on $\mathbf{TM}$ which
are metric compatible to a d--metric
\begin{equation}
\mathbf{g}=\ g_{ij}(x,y)\ e^{i}\otimes e^{j}+\ h_{ab}(x,y)\ \mathbf{e}%
^{a}\otimes \mathbf{e}^{b}  \label{m1}
\end{equation}%
with  N--adapted decomposition $\mathbf{g=}hg\mathbf{\oplus _{N}}%
vg=[hg,vg]. $\footnote{%
Any d--metric $\mathbf{g}=g_{\alpha \beta }du^{\alpha }du^{\beta }$ on $TM,$
via corresponding frame/coordinate transforms can be parametrized in the
form (\ref{m1}) and $\mathbf{\tilde{g}}$ (\ref{slm}) (in the last case, we
have to prescribe a generating function $F).$ This mean that on the total
space of a tangent bundle endowed with metric structure $\mathbf{g}$ we can
always introduce Finsler variables when $\mathbf{g=\tilde{g}}$ and there is
a $h$--$v$--splitting $\mathbf{N=\tilde{N}.}$ The constructions are
performed equivalently but depend on the type of geometric structure chosen
to be the fundamental one. If $F$ is prescribed, then we construct data $%
\left( F:\mathbf{\tilde{N},\tilde{g}}\right) $ which up to frame transforms [%
$\mathbf{e}_{\gamma ^{\prime }}=e_{\ \gamma ^{\prime }}^{\gamma }\mathbf{%
\tilde{e}}_{\gamma };$ the vielbeins $e_{\ \gamma ^{\prime }}^{\gamma }$
have to be defined as a solution of an algebraic quadratic equations $%
\mathbf{g}_{\alpha ^{\prime }\beta ^{\prime }}=e_{\ \alpha ^{\prime
}}^{\alpha }e_{\ \beta ^{\prime }}^{\beta }\mathbf{\tilde{g}}_{\alpha \beta
} $ for given $\mathbf{g}_{\alpha ^{\prime }\beta ^{\prime }}$ and $\mathbf{%
\tilde{g}}_{\alpha \beta }$] \ are equivalent to some data $(\mathbf{N,g}).$
Inversely, we can fix any $(\mathbf{N,g})$ (in particular, $\mathbf{N}$ can
be for any conventional $h$--$v$--splitting) and then chose any convenient $%
F $ when via frame transforms $(\mathbf{N,g})\rightarrow (\mathbf{\tilde{N},%
\tilde{g}}).$} The condition of compatibility $\mathbf{Dg=0}$ split in
respective conditions for $h$-$v$--components, $%
D_{j}g_{kl}=0,D_{a}g_{kl}=0,D_{j}h_{ab}=0,D_{a}h_{bc}=0.$

We can construct a canonical d--connection $\widehat{\mathbf{D}}$ completely
defined by a d--metric $\mathbf{g}$ (\ref{m1}) in metric compatible form, $%
\widehat{\mathbf{D}}\mathbf{g=}0,$ and with zero $h$- and $v$-torsions (with
$\widehat{T}_{\ jk}^{i}=0$ and $\widehat{T}_{\ bc}^{a}=0$ but, in general,
nonzero $\widehat{T}_{\ ja}^{i},\widehat{T}_{\ ji}^{a}$ and $\widehat{T}_{\
bi}^{a},$ see (\ref{dtors})). The coefficients of $\widehat{\mathbf{D}},$
computed with respect to N--adapted frames are $\widehat{\mathbf{\Gamma }}%
_{\ \alpha \beta }^{\gamma }=\left( \widehat{L}_{jk}^{i},\widehat{L}%
_{bk}^{a},\widehat{C}_{jc}^{i},\widehat{C}_{bc}^{a}\right) $ for
\begin{eqnarray}
\widehat{L}_{jk}^{i} &=&\frac{1}{2}g^{ir}\left( \mathbf{e}_{k}g_{jr}+\mathbf{%
e}_{j}g_{kr}-\mathbf{e}_{r}g_{jk}\right), \nonumber\\
\widehat{L}_{bk}^{a} &=& e_{b}(N_{k}^{a})+\frac{1}{2}h^{ac}\left(
e_{k}h_{bc}-h_{dc}\ e_{b}N_{k}^{d}-h_{db}\ e_{c}N_{k}^{d}\right) ,   \label{candcon}  \\
\widehat{C}_{jc}^{i} &=&\frac{1}{2}g^{ik}e_{c}g_{jk},\ \widehat{C}_{bc}^{a}=%
\frac{1}{2}h^{ad}\left( e_{c}h_{bd}+e_{c}h_{cd}-e_{d}h_{bc}\right) .  \nonumber
\end{eqnarray}

For any metric structure $\mathbf{g}$ on $TM$, we can compute also the
Levi--Civita connection $\nabla $ for which $\ ^{\nabla }\mathcal{T}^{\alpha
}=0$ and $\nabla \mathbf{g=0.}$ There is a canonical distortion relation
\begin{equation}
\widehat{\mathbf{D}}\mathbf{=}\nabla +\widehat{\mathbf{Z}}  \label{dcdc}
\end{equation}%
where both connections $\widehat{\mathbf{D}},\nabla $ and $\widehat{\mathbf{Z%
}}$ (such a distortion tensor is an algebraic combination of nontrivial
torsion coefficients $\widehat{T}_{\ ja}^{i},\widehat{T}_{\ ji}^{a}$ and $%
\widehat{T}_{\ bi}^{a}$) are uniquely defined by the same metric structure $%
\mathbf{g}$. Taking $\mathbf{g=\tilde{g},}$ such values $\widehat{\mathbf{%
\tilde{D}}},\tilde{\nabla}$ and $\widehat{\mathbf{\tilde{Z}}}$ can be
derived from a Finsler metric $F$ (for simplicity, we omit explicit
coordinate formulas for $\nabla $ and $\widehat{\mathbf{Z}},$ see details in %
\cite{vrev1,vsgg,ma}). This allows us to construct a complete model of
Finsler space on $TM.$ Such a canonical metric compatible geometry is
determined by data $\left( F:\mathbf{g,N,}\widehat{\mathbf{D}}\right).$

\subsubsection{The Cartan d--connection}

Historically, E. Cartan \cite{cartan} used another type of metric compatible
d--connection $\mathbf{\tilde{D}}$ which via frame transforms and
deformations can be related to $\widehat{\mathbf{D}}$ (\ref{candcon}). If we
consider that $\widehat{L}_{bk}^{a}\rightarrow \widehat{L}_{jk}^{i}$ and $%
\widehat{C}_{jc}^{i}\rightarrow $ $\ \widehat{C}_{bc}^{a},$ by identifying
respectively $a=n+i$ with $i$ and $b=n+j,$ we obtain the so--called normal
d--connection \ $\ ^{n}\mathbf{D}=(\widehat{L}_{\ jk}^{i},\widehat{C}%
_{jc}^{i})$ with N--adapted 1--form
 $\widehat{\mathbf{\Gamma }}_{\ j}^{i}=\widehat{\mathbf{\Gamma }}_{\ j\gamma
}^{i}\mathbf{e}^{\gamma }=\widehat{L}_{\ jk}^{i}e^{k}+\widehat{C}_{jc}^{i}%
\mathbf{e}^{c},$
where
\begin{equation}
\widehat{L}_{\ jk}^{i}=\frac{1}{2}g^{ih}(\mathbf{e}_{k}g_{jh}+\mathbf{e}%
_{j}g_{kh}-\mathbf{e}_{h}g_{jk}),\widehat{C}_{\ bc}^{a}=\frac{1}{2}%
g^{ae}(e_{b}h_{ec}+e_{c}h_{eb}-e_{e}h_{bc}).  \label{cdc}
\end{equation}%
Taking $\mathbf{g=\tilde{g},}$ when $\tilde{h}_{ij}=\tilde{g}_{ij},$ \ and $%
\mathbf{N=\tilde{N}}$ in (\ref{cdc}), we define the Cartan d--connection \ $%
\mathbf{\tilde{D}}=(\tilde{L}_{\ jk}^{i},\tilde{C}_{jc}^{i}).$

For $\mathbf{\tilde{D},}$ the nontrivial h-- and v--components of torsion $%
\mathbf{\tilde{T}}_{\beta \gamma }^{\alpha }=\{\tilde{T}_{jc}^{i},\tilde{T}%
_{ij}^{a},\tilde{T}_{ib}^{a}\}$ and curvature $\mathbf{\tilde{R}}_{\ \beta
\gamma \tau }^{\alpha }=\{\tilde{R}_{\ hjk}^{i},\tilde{P}_{\ jka}^{i},\tilde{%
S}_{\ bcd}^{a}\}$ are respectively
\begin{eqnarray}
\tilde{T}_{jc}^{i}&=&\tilde{C}_{\ jc}^{i},\tilde{T}_{ij}^{a}=\tilde{\Omega}%
_{ij}^{a},\tilde{T}_{ib}^{a}=e_{b}\left( \tilde{N}_{i}^{a}\right) -\tilde{L}%
_{\ bi}^{a},  \label{torscdc}\\
\tilde{R}_{\ hjk}^{i} &=&\mathbf{\tilde{e}}_{k}\tilde{L}_{\ hj}^{i}-\mathbf{%
\tilde{e}}_{j}\tilde{L}_{\ hk}^{i}+\tilde{L}_{\ hj}^{m}\tilde{L}_{\ mk}^{i}-%
\tilde{L}_{\ hk}^{m}\tilde{L}_{\ mj}^{i}-\tilde{C}_{\ ha}^{i}\tilde{\Omega}%
_{\ kj}^{a},  \label{curvcart} \\
\tilde{P}_{\ jka}^{i} &=&e_{a}\tilde{L}_{\ jk}^{i}-\mathbf{\tilde{D}}_{k}%
\tilde{C}_{\ ja}^{i},\ \tilde{S}_{\ bcd}^{a}=e_{d}\tilde{C}_{\ bc}^{a}-e_{c}%
\tilde{C}_{\ bd}^{a}+\tilde{C}_{\ bc}^{e}\tilde{C}_{\ ed}^{a}-\tilde{C}_{\
bd}^{e}\tilde{C}_{\ ec}^{a}.  \nonumber
\end{eqnarray}%
We note that $h$-- and $v$--components of torsion are zero, $\tilde{T}%
_{jk}^{i}=0$ and $\tilde{T}_{bc}^{a}=0,$ even there are also nontrivial
components $\tilde{T}_{ij}^{a}$ and $\tilde{T}_{ib}^{a}.$

The Cartan d--connection is characterized by a unique distortion relation
\begin{equation}
\mathbf{\tilde{D}=}\tilde{\nabla}+\mathbf{\tilde{Z}},  \label{dcartdc}
\end{equation}%
where all values $\mathbf{\tilde{D},}\tilde{\nabla}$ and $\mathbf{\tilde{Z}}$
are determined (up to frame/coordinate transforms) by $F$ and $\mathbf{%
\tilde{g}.}$ On $TM,$ the data $\left( F;\mathbf{\tilde{g},\tilde{N},\tilde{D%
}}\right) $ define a model of Cartan--Finsler geometry.

\subsubsection{The almost K\"{a}hler model of Cartan--Finsler geometry}

\label{akmfg}There is a fundamental result by M. Matsumoto \cite{matsumoto}
which allows us to reformulate $\left( F;\mathbf{\tilde{g},\tilde{N},\tilde{D%
}}\right) ,$ equivalently, as an almost K\"{a}hler geometry. Let us consider
 a linear operator $\mathbf{\tilde{J}}$ acting on vectors on $TM$
following formulas
 $
\mathbf{\tilde{J}}(\mathbf{\tilde{e}}_{i})=-e_{i}$  and $\mathbf{%
\tilde{J}}(e_{i})=\mathbf{\tilde{e}}_{i},$
where the superposition $\mathbf{\tilde{J}\circ \tilde{J}=-I,}$ for the
unity matrix $\mathbf{I.}$

A Finsler fundamental function $F(x,y)$ and the corresponding Sasaki type
metric $\mathbf{\tilde{g}}$ (\ref{slm}) induce, respectively, a canonical
1--form $\tilde{\omega}=F\frac{\partial F}{\partial y^{i}}e^{i}$ and a
canonical 2--form
\begin{equation}
\mathbf{\tilde{\theta}}=\ \tilde{g}_{ij}(x,y)\mathbf{e}^{i}\wedge e^{j}.
\label{asstr}
\end{equation}%
Such objects are associated to $\mathbf{J}$ following formula $\ \mathbf{%
\tilde{\theta}(X,Y)}:= \mathbf{\tilde{g}}(\mathbf{\tilde{J}X,Y})$ for any
d--vectors $\mathbf{X}$ and $\mathbf{Y.}$ By straightforward computations,
we can prove that $d\tilde{\omega}=\mathbf{\tilde{\theta}}$. This states on $%
TM$ an almost Hermitian (symplectic) structure nonholonomically induced by $%
F $. Considering $\ ^{\theta }\mathbf{D\equiv \tilde{D}}$ as an almost
symplectic d--connection, we can prove that
  $\ ^{\theta }\mathbf{D}_{\mathbf{X}}\mathbf{\tilde{\theta}=0}$ and $\
^{\theta }\mathbf{D}_{\mathbf{X}}\mathbf{\tilde{J}=0.}$
  The data $(F;\mathbf{\tilde{\theta},\tilde{J},}\ ^{\theta }\mathbf{D})$
define a nonholonomic almost K\"{a}hler space.

It should be noted that canonical almost symplectic/K\"{a}hler variables $%
\tilde{\theta},\tilde{\mathbf{J}},$ and $\ ^{\theta }\mathbf{D}$ can be
introduced for any $TM$ endowed with d--metric, $\mathbf{g}$, and
N--connection, $\mathbf{N,}$ structures. For this, we have to prescribe an
effective generating function $F$ and compute $\mathbf{\tilde{e}}_{\alpha }$
and $\mathbf{\tilde{g}}$. Solving a quadratic algebraic equation to
construct $\mathbf{\tilde{e}} _{\gamma }\rightarrow \mathbf{e}_{\gamma
^{\prime }}= e_{\ \gamma ^{\prime }}^{\gamma }\mathbf{\tilde{e}}_{\gamma },$
we encode equivalently and data $(TM,\mathbf{g})$ as a Cartan--Finsler
model, $(F; \mathbf{\tilde{g},\tilde{N},\tilde{D}}),$ and/or an almost K\"{a}%
hler--Finsler model, $(F; \mathbf{\tilde{\theta},\tilde{J},}\ ^{\theta}%
\mathbf{D})$. Such results were used for deformation quantization of
Lagrange--Finsler spaces \cite{vdefqfl}. Finally, we cite an alternative
approach with K\"{a}hler structures associated to Berwald or Randers metrics
etc \cite{peyghan1,peyghan3,peyghan4}.

\section{Metric Noncompatible Finsler Spaces}

\label{s3} There were developed alternative approaches to constructing
geometric models determined by a fundamental Finsler function $F(x,y).$ In
some sense, mathematicians attempted to formulate a more ''simple'' version
of Finsler geometry than the Cartan model, not mimicking on tangent bundles
a variant of nonholonomic Riemann space. Chronologically, the first metric
noncompatible models were proposed by L. Berwald \cite{berwald} and S. Chern %
\cite{chern} (see details in \cite{bcs}). More recently, a different
"nonstandard" construction for the Ricci curvature was proposed by H.
Akbar--Zadeh \cite{akbar}. In this section, we outline three important
models of ''non--Cartan'' Finsler spaces.

\begin{itemize}
\item The \textbf{Berwald d--connection } is $\ ^{B}\mathbf{D:}=(\ ^{B}L_{\
jk}^{i}=\partial \tilde{N}_{j}^{i}/\partial y^{k},\ ^{B}C_{jc}^{i}=0)$, when
the h--covariant derivative is defined by the first v--derivatives of
Cartan's N--connection structure $\tilde{N}$ (\ref{cncl}) and an additional
constraint that the v--covariant derivative is zero is imposed.

\item The \textbf{Chern d--connection} is $\ ^{Ch}\mathbf{D:}=(\ ^{Ch}L_{\
jk}^{i}=\tilde{L}_{\ jk}^{i},\ ^{Ch}C_{jc}^{i}=0)$, when the h--covariant
derivative is the Cartan's one computed as in the first formula for the
normal d--connection (\ref{cdc}), with an additional constraint that the
v--covariant derivative is zero is imposed.
\end{itemize}

Both geometries with $\ ^{B}\mathbf{D}$ and/or $\ ^{Ch}\mathbf{D}$ can be
modelled on $hTM.$ The Chern's d--connection keeps all properties of the
Levi--Civita connection for geometric constructions on the $h$--subspace.
Nevertheless, both d--connections are not metric compatible on total space
of $TM,$ i.e. there are nontrivial nonmetricity fields, $\mathcal{Q}:=%
\mathbf{Dg,}$$\ ^{B}\mathcal{Q}\neq 0$ and $\ ^{Ch}\mathcal{Q}\neq 0$. Such
nonmetricities, in general, present substantial difficulties in constructing
well--defined minimal Finsler extensions of the standard models of Finsler
gravity, see critical remarks in Refs. \cite{vcrit,vrev1,vsgg} (for
instance, there are problems with physical interpretation of nonmetricity
fields, definition of spinors and constructing Dirac operators, formulating
conservation laws etc).

Applying formulas (\ref{tors1}) and (\ref{curv1}), we compute respectively
the torsions $\ ^{B}\mathcal{T}\neq 0,\ ^{Ch}\mathcal{T}=0$ and curvatures $%
\ ^{B}\mathcal{R}\neq 0,\ ^{Ch}\mathcal{R}\neq 0$ as 2--forms.

In Refs. \cite{bao2,akbar}, it is used as curvature in Finsler geometry the
value $
\breve{\mathbf{R}}=\breve{R}_{k}^{i}\ dx^{k}\otimes \frac{\partial }{%
\partial x^{i}}|_{x}:T_{x}M\rightarrow T_{x}M,$ (this type of ''curvature'' is considered \ in a manner different that
definitions of curvatures with associated 2--forms) where
\begin{equation}
\breve{R}_{k}^{i}=2\frac{\partial \tilde{G}^{i}}{\partial x^{k}}-y^{j}\frac{%
\partial ^{2}\tilde{G}^{i}}{\partial x^{j}\partial y^{k}}+2\tilde{G}^{j}%
\frac{\partial ^{2}\tilde{G}^{i}}{\partial y^{j}\partial y^{k}}-\frac{%
\partial \tilde{G}^{i}}{\partial y^{j}}\frac{\partial \tilde{G}^{j}}{%
\partial y^{k}}  \label{flcurv}
\end{equation}%
is determined by semi--spray $\tilde{G}^{k}$ (\ref{smspr}). Such values are
convenient for study geometric objects in $T_{x}M$ for a point $x\in $ $M.$

\subsection{Nonholonomic deformations and distortions}

We note that above presented formulas for metric compatible and
noncommpatible d--connections in Finsler geometry are uniquely related via
certain distortion tensors of type (\ref{dcartdc}) and (\ref{dcdc}). In
order to derive deformations of fundamental tensor objects (for instance,
torsions and curvature) in N--adapted form it is convenient to perform all
constructions for the Cartan d--connection and then to compute distortions
for necessary tensors and differential forms. We can write%
\begin{equation}
\ ^{B}\mathbf{D}=\ \mathbf{\tilde{D}+}\ ^{B}\mathbf{\tilde{Z}}%
\mbox{\ and \ }\ ^{Ch}\mathbf{D}\mathbf{=}\ \mathbf{\tilde{D}+}\ ^{Ch}%
\mathbf{\tilde{Z},}  \label{distdb}
\end{equation}%
where all d--connections and distorting tensors are uniquely computed using
components of $\mathbf{\tilde{g}}$ and $\mathbf{\tilde{N}}$ for a chosen
fundamental Finsler function $F.$ Both d--connections are with nontrivial
nonmetricity $\ ^{B}\mathcal{Q}\neq 0$ and $\ ^{Ch}\mathcal{Q}\neq 0.$
Nevertheless, such tensor objects are not arbitrary ones but completely
induced by respective$\ ^{B}\mathbf{\tilde{Z}}$ and $\ ^{Ch}\mathbf{\tilde{Z}%
.}$

\subsubsection{Distortions of Ricci tensors for Finsler d--connections}

Hereafter, we shall denote by $\ ^{F}\mathbf{D}$ any d--connection (metric
compatible or not, for instance, of type (\ref{distdb})) uniquely determined
by $F.$ Computing the curvature 2--form (\ref{curv1a}) for $\ ^{F}\mathbf{D,}
$ $
\ ^{F}\mathcal{R}_{~\beta }^{\alpha }\doteqdot \ ^{F}\mathbf{D}\ ^{F}\mathbf{%
\Gamma }_{\ \beta }^{\alpha }=d\ ^{F}\mathbf{\Gamma }_{\ \beta }^{\alpha }-\
^{F}\mathbf{\Gamma }_{\ \beta }^{\gamma }\wedge \ ^{F}\mathbf{\Gamma }_{\
\gamma }^{\alpha }=\ ^{F}\mathbf{R}_{\ \beta \gamma \delta }^{\alpha }%
\mathbf{e}^{\gamma }\wedge \mathbf{e}^{\delta },$ %
we express
\begin{equation}
\ ^{F}\mathbf{R}_{\ \beta \gamma \delta }^{\alpha }=\ \mathbf{\tilde{R}}_{\
\beta \gamma \delta }^{\alpha }+\ \mathbf{\tilde{Z}}_{\ \beta \gamma \delta
}^{\alpha }.  \label{curvdist}
\end{equation}%
Contracting indices, $\ ^{F}\mathbf{R}_{\ \beta \gamma }:=\ ^{F}\mathbf{R}%
_{\ \beta \gamma \alpha }^{\alpha },$ we obtain the N--adapted coefficients
for the Ricci tensor $\ ^{F}\mathcal{R}\mathit{ic}\mathbf{\doteqdot \{}\ ^{F}%
\mathbf{\mathbf{R}}_{\beta \gamma }\mathbf{=}\left(
R_{ij},R_{ia},R_{ai},R_{ab}\right) \mathbf{\}.}$ This tensor, in general, is
not symmetric, $\ ^{F}\mathbf{\mathbf{R}}_{\beta \gamma }\mathbf{\neq }\ ^{F}%
\mathbf{\mathbf{R}}_{\gamma \beta }\mathbf{,}$ and corresponding Bianchi
identities result in constraints%
\begin{equation}
\ ^{F}\mathbf{D}^{\beta }\left( \ ^{F}\mathbf{\mathbf{R}}_{\beta \gamma }-%
\frac{1}{2}\mathbf{g}_{\beta \gamma }\ _{s}^{F}R\right) :=\ ^{F}\mathbf{J}%
_{\gamma }\neq 0  \label{nhconsl}
\end{equation}%
even for metric compatible $\ ^{F}\mathbf{D.}$ In above formulas, the scalar
curvature is by definition
\begin{equation}
\ _{s}^{F}R:=\mathbf{g}^{\beta \gamma }\ ^{F}\mathbf{\mathbf{R}}_{\beta
\gamma }=g^{ij}R_{ij}+h^{ab}R_{ab}.  \label{riccifs}
\end{equation}%
The Einstein tensor $\ ^{F}\mathbf{E}_{\beta \gamma }$ can be postulated in
standard form for any $\ ^{F}\mathbf{D,}$%
\begin{equation}
\ ^{F}\mathbf{E}_{\beta \gamma }:=\ ^{F}\mathbf{\mathbf{R}}_{\beta \gamma }-%
\frac{1}{2}\mathbf{g}_{\beta \gamma }\ _{s}^{F}R.  \label{einstfdt}
\end{equation}

The relations (\ref{nhconsl}) can be considered as a nonholonomic ''unique''
deformation of standard relations $\nabla _{\alpha }E^{\alpha \beta }=0,$
with Einstein tensor $E^{\alpha \beta }$ for the Levi--Civita connection $%
\nabla _{\alpha },$ in general relativity. Using distorting relations (\ref%
{distdb}), we can always compute the source $\ ^{F}\mathbf{J}_{\gamma }$ and
define an associated set of constraints as in nonholonomic mechanics.

The distortions of connections will be used in the nonholonomic geometric
flow theory  as follows. Contracting indices in (\ref{curvdist}), we
compute
{\small
\begin{eqnarray*}
\ ^{F}\mathbf{R}_{\ \beta \gamma }&=&\ \mathbf{\tilde{R}}_{\ \beta \gamma }+\
\mathbf{\tilde{Z}}_{\ \beta \gamma \delta }^{\alpha },  \label{districci}\\
\mathbf{\tilde{R}}_{\ \beta \gamma }&:=&\ \mathbf{\tilde{R}}_{\ \beta \gamma
\alpha }^{\alpha }=\mathbf{e}_{\alpha }\tilde{\mathbf{\Gamma }}_{\ \beta
\gamma }^{\alpha }-\mathbf{e}_{\gamma }\ \tilde{\mathbf{\Gamma }}_{\ \beta
\alpha }^{\alpha } %\\
%&&
+\tilde{\mathbf{\Gamma }}_{\ \beta \gamma }^{\varphi }\ \tilde{\mathbf{%
\Gamma }}_{\ \varphi \alpha }^{\alpha }-\tilde{\mathbf{\Gamma }}_{\ \beta
\alpha }^{\varphi }\ \tilde{\mathbf{\Gamma }}_{\ \varphi \gamma }^{\alpha }+%
\tilde{\mathbf{\Gamma }}_{\ \beta \varphi }^{\alpha }W_{\gamma \alpha
}^{\varphi } \\
\mathbf{\tilde{Z}}_{\ \beta \gamma } &:=&\mathbf{\tilde{Z}}_{\ \beta \gamma
\alpha }^{\alpha }=\mathbf{e}_{\alpha }\ \mathbf{\tilde{Z}}_{\ \beta \gamma
}^{\alpha }-\mathbf{e}_{\gamma }\ \mathbf{\tilde{Z}}_{\ \beta \alpha
}^{\alpha }+\mathbf{\tilde{Z}}_{\ \beta \gamma }^{\varphi }\ \mathbf{\tilde{Z%
}}_{\ \varphi \alpha }^{\alpha }-\mathbf{\tilde{Z}}_{\ \beta \alpha
}^{\varphi }\ \mathbf{\tilde{Z}}_{\ \varphi \gamma }^{\alpha }+ \\
&&\tilde{\mathbf{\Gamma }}_{\ \beta \gamma }^{\varphi }\ \mathbf{\tilde{Z}}%
_{\ \varphi \alpha }^{\alpha }-\tilde{\mathbf{\Gamma }}_{\ \beta \alpha
}^{\varphi }\ \mathbf{\tilde{Z}}_{\ \varphi \gamma }^{\alpha }+\mathbf{%
\tilde{Z}}_{\ \beta \gamma }^{\varphi }\ \tilde{\mathbf{\Gamma }}_{\ \varphi
\alpha }^{\alpha }-\mathbf{\tilde{Z}}_{\ \beta \alpha }^{\varphi }\ \tilde{%
\mathbf{\Gamma }}_{\ \varphi \gamma }^{\alpha }+\mathbf{\tilde{Z}}_{\ \beta
\varphi }^{\alpha }W_{\gamma \alpha }^{\varphi }.
\end{eqnarray*}%
}
Introducing in above formulas $\ \mathbf{\tilde{Z}}=\ ^{B}\mathbf{\tilde{Z},}
$ or $\ \mathbf{\tilde{Z}}=\ ^{Ch}\mathbf{\tilde{Z},}$ we get explicit
formulas for distortions of the Ricci tensor (\ref{districci}) for the Berwald,
or Chern, d--connection (for simplicity, we omit such technical results in
this work). Equivalent distortions can be computed if we fix, for instance,
as a ''background'' connection just the Levi--Civita connection $\nabla $ but such constructions are not
adapted to the N--connection splitting. Other fundamental geometric objects
derived for $\ ^{B}\mathbf{D}$ and/or $\ ^{Ch}\mathbf{D}$ can be generated
by noholonomic deformations from analogous ones for the metric compatible,
and almost K\"{a}hler, d--connection $\ ^{\theta }\mathbf{D\equiv \tilde{D}.}
$ This property is very important because it allows us to construct, for
instance, Dirac operators and define generalized Perelman's functionals (see
next section) even such geometric models are metric noncompatible.

We conclude that metric compatible Finsler geometry models with
d--connections uniquely defined by respective metric structures (and induced
by fundamental Finsler functions and Hessians) play a preferred role both
for elaborating geometric and physical theories on $TM.$ Using one of the
connections $\mathbf{\tilde{D},}$ or $\mathbf{\hat{D}}$, we work as in usual
Riemann--Cartan geometry and/or the Ricci flow theory of Riemannian metrics.
It is also possible to reformulate the theories as almost K\"{a}hler
geometries. Then, such constructions can be nonholonomically deformed into
metric noncompatible structures by considering respective distortion tensors.

\subsubsection{A Ricci tensor constructed by Akbar--Zadeh}

For a class of geometric and flow models on $T_{x}M,$ see Refs. \cite%
{akbar,bao1,bao2,peyghan}, a different than (\ref{districci}) Ricci tensor
is used. As noted above, there is an alternative curvature tensor $\breve{R}%
_{k}^{i}$ (\ref{flcurv}) completely determined by semi--spray $\tilde{G}^{k}$
(\ref{smspr}). Contracting indices, we introduce a scalar function $\breve{R}%
(x,y):=F^{-2}\breve{R}_{i}^{i}$ and define a variant of the Ricci tensor,
\begin{equation}
\breve{R}ic_{jk}:=F^{-2}\frac{\partial ^{2}\breve{R}}{\partial y^{j}\partial
y^{k}}.  \label{ricciaz}
\end{equation}

The geometric object $\breve{R}ic_{jk}$ (\ref{ricciaz}) is induced by the
Finsler metric $F$ \ via inverse Hessian ${\tilde{g}}^{ij},$ see ${\tilde{g}}%
_{ij}$ (\ref{hessian}), and $\tilde{G}^{k}$ not involving in such a model
the N--connection structure, lifts on metrics on total space of $TM,$ and
d--connections. By definition, the scalar $\breve{R}$ is positive homogeneous
of degree $0$ in $v$--variables $y^{a}.$

Following such an approach to Finsler geometry, the Einstein metrics ${%
\tilde{g}}_{ij}$ are those for which $\breve{R}ic_{jk}=\lambda (x){\tilde{g}}%
_{jk}$, i.e. when the scalar function $\breve{R}(x,y)=\lambda (x)$ is a
function only on $h$--variables $x^{k}.$ This class of Finsler spaces is by
definition different from that derived for a Ricci d--tensor $\ ^{F}\mathcal{%
R}\mathit{ic}$ (\ref{districci}) on $TM.$ The priority of $\breve{R}ic_{jk}$
(\ref{ricciaz}) is that it is always symmetric (by definition) and
''simplified'' to consider a Ricci field and/or evolution dynamics in any
point $T_{x}M.$ Nevertheless, such a nonholonomically constrained model does
not allow us to study, for instance, mutual transforms of Riemann and
Finsler metrics with general nonsymmetric Ricci d--tensor $\ \mathbf{\tilde{R%
}}_{\ \beta \gamma },$ and respective Einstein d--tensor (\ref{einstfdt}) on
total space of $TM$ and nonholonomic (pseudo) Riemannian manifolds \cite%
{vric3,vric4,vric6} (a series of works from 2006--2008).

A variant of geometric evolution equations for Finsler metrics $F(\chi ,x,y)$
using $\breve{R}ic_{jk}$ was published in 2007 in Ref. \cite{bao1},%
\begin{equation}
\frac{\partial {\tilde{g}}_{ij}}{\partial \chi }=-2\breve{R}ic_{jk},
\label{riccib}
\end{equation}%
when the Hessian ${\tilde{g}}_{ij}(\chi ,x,y)$ and the volume element
\begin{equation}
\upsilon :=(\partial F/\partial y^{i})dx^{i}  \label{volel}
\end{equation}%
depend on $\chi \in \lbrack -\epsilon ,\epsilon ]\subset \mathbb{R}$ and $%
\epsilon >0$ is sufficiently small. These  equations consist an example of
heuristic Finsler evolution equations (\ref{riccifinslerheur}) when $\
^{F}Ric_{ij}\sim \breve{R}ic_{jk}.$ In order to elaborate a self--consistent
Ricci flow theory, at first steps, we have to prove the conditions when $%
\breve{R}ic_{jk}\sim \ ^{F}\Delta ,$ for a Finsler Laplacian, and find
certain analogs of Perelman's functionals from which (\ref{riccib}). \ One
of the aims of the present paper is to show that this type of evolution
models belong to a class of nonholonomicaly constrained systems (in general,
with metric noncompatible Finsler connections) which can be uniquely defined
via corresponding nonholonomic deformations and constraints from theories
flows of the canonical and/or Cartan d--connection \cite%
{vric1,vric2,vric3,vric4,vric5,vric6,vric7,vric8,vric9,vric10}.

For any $F(\chi )=F(\chi ,x,y),$ and respective ${\tilde{g}}_{ij}$ and $%
\tilde{N}_{i}^{a},$ we can compute a family of Ricci d--tensors,
$\mathbf{\tilde{R}}_{\ \beta \gamma }(\chi )=\{\tilde{R}_{\ hj}:=\tilde{R}_{\
hji}^{i},\tilde{P}_{\ ja}:=\tilde{P}_{\ jia}^{i},\ \tilde{S}_{\ bc}:=\
\tilde{S}_{\ bca}^{a}\},$ for the  d--connection $\mathbf{\tilde{D},}$ by constructing
respective tensors in (\ref{curvcart}), or
\begin{equation}
\widehat{\mathbf{R}}_{\ \beta \gamma }(\chi )=\widehat{R}_{ij}\doteqdot \widehat{R}%
_{\ ijk}^{k},\ \ \widehat{R}_{ia}\doteqdot -\widehat{R}_{\ ika}^{k},\
\widehat{R}_{ai}\doteqdot \widehat{R}_{\ aib}^{b},\ \widehat{R}%
_{ab}\doteqdot \widehat{R}_{\ abc}^{c},  \label{riccicdc}
\end{equation}%
for the canonical d--connection $\widehat{\mathbf{D}}$ (\ref{candcon}) (see
explicit formulas for h--v--compo\-nents in Refs. \cite{vrev1,vsgg,ma}).
Because all values $\breve{R}ic_{jk},\tilde{R}_{\ hj}$ and $\widehat{R}_{ij}$
are generated by the same Finsler metric, we can compute in unique forms (up
to frame transforms) the distortions
\begin{equation}
\breve{R}ic_{jk}=\tilde{R}_{\ jk}+\tilde{Z}ic_{jk},\ \breve{R}ic_{jk}=%
\widehat{R}_{\ jk}+\widehat{Z}ic_{jk},  \label{nhdt11}
\end{equation}%
if values $\mathbf{\tilde{R}}_{\ \beta \gamma }$ and $\widehat{\mathbf{R}}%
_{\ \beta \gamma }$ are defined on $TM.$ Similar splitting can be computed
in unique forms for Ricci d--tensors corresponding to $\ \ ^{B}\mathbf{D}%
\mathbf{=}\ \mathbf{\tilde{D}+}\ ^{B}\mathbf{\tilde{Z}}\ $ and $\ ^{Ch}%
\mathbf{D}\mathbf{=}\ \mathbf{\tilde{D}+}\ ^{Ch}\mathbf{\tilde{Z}}$ \ from (%
\ref{distdb}),
\begin{eqnarray}
\ ^{B}\mathbf{R}_{\ \beta \gamma } &=&\mathbf{\tilde{R}}_{\ \beta \gamma }+\
^{B}\mathbf{\tilde{Z}}ic_{\beta \gamma },\ ^{Ch}\mathbf{R}_{\ \beta \gamma }=%
\mathbf{\tilde{R}}_{\ \beta \gamma }+\ ^{Ch}\mathbf{\tilde{Z}}ic_{\beta
\gamma };  \label{nhdt21} \\
\ ^{B}\mathbf{R}_{\ \beta \gamma } &=&\widehat{\mathbf{R}}_{\ \beta \gamma
}+\ ^{B}\widehat{\mathbf{Z}}ic_{\beta \gamma },\ ^{Ch}\mathbf{R}_{\ \beta
\gamma }=\widehat{\mathbf{R}}_{\ \beta \gamma }+\ ^{Ch}\widehat{\mathbf{Z}}%
ic_{\beta \gamma }.  \label{nhdt22}
\end{eqnarray}

If we construct a geometric evolution model for $F(\chi )$ with ${\tilde{g}}%
_{ij}(\chi )$ derived for $\mathbf{\tilde{R}}_{\ \beta \gamma }$, such
constructions are preferred for almost K\"{a}hler models and deformation
quantization \cite{vdefqfl}, or for $\widehat{\mathbf{R}}_{\ \beta \gamma }$
(this is important to study evolution of exact solutions in gravity
theories \cite{vex3,vexsol}), we can always ''extract'' and follow evolution
of geometric objects and metric noncompatible Finsler geometries and/or with
''nonstandard'' curvature (\ref{flcurv}).

\subsection{Einstein--Finsler spaces}

We can work equivalently with any d--connection $\mathbf{\tilde{D},}$ $\mathbf{\hat{D},}%
\ ^{B}\mathbf{D,}$ $\ ^{Ch}\mathbf{D}$ (all these geometric objects are
uniquely determined by $F$ and/or $\widetilde{\mathbf{g}}.$ To study
possible physical applications with generalized gravitational field/
evolution equations is important to decide which type of connection and
nonholonomic constraints are used for elaborating physical theories.

It should be noted that all constructions provided in previous sections can
be performed not only on tangent bundle $TM$ with N--connection splitting (%
\ref{whitneyt}) but on any manifold $\mathbf{V}$ with ''conventional'' $h$--$%
v$--splitting (called also as a nonholonomic manifold) defined by a Whitney
sum
\begin{equation}
T\mathbf{V}=h\mathbf{V}\oplus v\mathbf{V.}  \label{whitney}
\end{equation}%
Such a nonintegrable distribution, for instance, can be introduced always on
a Lorenz manifold $\mathbf{V}$ in general relativity (GR) defining a
so--called $2+2$ splitting.\footnote{%
we use ''boldface'' letters for manifolds, bundles endowed with
N--connection structure and for geometric objects adapted to corresponding
h-v--splitting} More than that, GR and various modifications can be
described equivalently in Finsler and/or almost K\"{a}hler variables, see
details in Refs. \cite{vrev1,vaxiom,vsgg,vex3,vexsol}. There is an unified
formalism for geometrical/physical models which can be elaborated any
nonholonomic manifold, $\mathbf{V,}$ or tangent bundle space, $\mathbf{V}%
=TM. $ Physically, the $y$--variables are treated differently: On a general $%
\mathbf{V},$ such values/coordinates are certain nonholonomically
constrained ones; on $TM,$ the values $y^{a}$ as some ''velocities''
(for dual configurations on $T^{\ast }M,$ there are considered ''momenta'').

The Einstein equations in GR were postulated in standard form using the
Levi--Civita connection  $\nabla =\{\Gamma _{\ \alpha \beta
}^{\gamma }\},$
\begin{equation}
R_{\ \beta \delta }-\frac{1}{2}g_{\beta \delta }R=\varkappa T_{\beta \delta
}.  \label{einsteq}
\end{equation}%
 In formulas (\ref{einsteq}), $R_{\ \beta \delta }$ and $R$
are respectively the Ricci tensor and scalar curvature of $\nabla ;$ it is
also considered the energy--momentum tensor for matter, $T_{\alpha \beta },$
where $\varkappa =const.$ Various tetradic, spinor, connection etc variables
were used with various purposes to construct exact solutions and quantize
gravity, see standard monographs \cite{misner,wald}.

Using conventional
Finsler variables, the gravitational field equations (\ref{einsteq}) can be
re--written equivalently using the canonical d--connection $\widehat{\mathbf{%
D}}$ (\ref{candcon}),
\begin{eqnarray}
\widehat{\mathbf{R}}_{\ \beta \delta }-\frac{1}{2}\mathbf{g}_{\beta \delta
}\ \ _{s}\widehat{R} &=&\widehat{\mathbf{\Upsilon }}_{\beta \delta },
\label{cdeinst} \\
\widehat{L}_{aj}^{c}=e_{a}(N_{j}^{c}),\ \widehat{C}_{jb}^{i}=0,\ \Omega _{\
ji}^{a} &=&0,  \label{lcconstr}
\end{eqnarray}%
for \ $\widehat{\mathbf{\Upsilon }}_{\beta \delta }\rightarrow T_{\beta
\delta }$ if $\widehat{\mathbf{D}}\rightarrow \nabla .$ The constraints (\ref%
{lcconstr}) are equivalent to the condition of vanishing of torsion (\ref%
{dtors}), the distortion d--tensors $\widehat{\mathbf{Z}}=0,$ which
results in $\widehat{\mathbf{D}}\mathbf{=}\nabla ,$ see formulas
(\ref{dcdc}). The system of equations (\ref{cdeinst}) and (\ref{lcconstr}) have a very
important property of decoupling with respect to N--adapted frames (\ref%
{dder}) and (\ref{ddif}) which allows to integrate the Einstein and
geometric evolution equations in very general forms \cite%
{vrev1,vaxiom,vsgg,vex3,vexsol,vric1,vric2,vric8,vric9,vric10}.\footnote{%
Up to frame/coordinate transforms the equations (\ref{einsteq}), and/or (\ref%
{cdeinst}) and (\ref{lcconstr}), are equivalent to
\begin{eqnarray}
\mathbf{\tilde{R}}_{\ \beta \delta }-\frac{1}{2}\widetilde{\mathbf{g}}%
_{\beta \delta }\ _{s}\tilde{R} &=&\mathbf{\tilde{\Upsilon}}_{\beta \delta },
\label{carteinst} \\
\tilde{L}_{\ bi}^{a}=e_{b}(\tilde{N}_{i}^{a}),\ \tilde{C}_{\ jc}^{i}=0,\
\tilde{\Omega}_{ij}^{a} &=&0,  \label{lccnconstcart}
\end{eqnarray}%
when the d--connection is chosen to be the Cartan one $\ ^{\theta }\mathbf{%
D\equiv \tilde{D}.}$ The conditions (\ref{lccnconstcart}) are for zero
torsion (\ref{torscdc}) when $\mathbf{\tilde{Z}}$ $=0$ and $\mathbf{\tilde{D}%
=}\tilde{\nabla}$, in (\ref{dcartdc}). Here, we note that, in general, $%
\mathbf{\tilde{\Upsilon}}_{\beta \delta }$ is different from $\widehat{%
\mathbf{\Upsilon }}_{\beta \delta }.$ The priority of system (\ref{carteinst}%
) written in Cartan d--metric and d--connection Finsler variables is a the
possibility to re--define the geometric objects in almost K\"{a}hler
variables with a further deformation quantization \cite{vdefqfl}. The
''decoupling effect'' for gravitational field equations also exists but the
zero torsion conditions seem to be ''more rigid'' for such configurations.}

On $TM,$ for metric compatible Finsler geometry models, constraints of type (%
\ref{lcconstr}), or (\ref{lccnconstcart}), are not necessary. Using
distortion relations (\ref{dcdc}), (\ref{dcartdc}) and (\ref{distdb}), we
can compute other types of distortions,
\begin{equation}
\tilde{\nabla}=\ ^{Ch}\mathbf{D-}\ _{\nabla }^{Ch}\mathbf{Z=}\ ^{B}\mathbf{D-%
}\ _{\nabla }^{B}\mathbf{Z=\tilde{D}-\tilde{Z}=}\widehat{\mathbf{D}}-%
\widehat{\mathbf{Z}},  \label{distrelg}
\end{equation}%
where all geometric objects are determined by $F(u)$ via ${\tilde{g}}%
_{ij}(u).$ Such nonholonomic constraints show that in any model of Finsler geometry we can consider equivalently ''not--adapted'' (to N--connection) geometric constructions with $\tilde{\nabla}$ defined by a (pseudo)
Riemannian metric
\begin{equation}
\ \tilde{g}_{\underline{\alpha }\underline{\beta }}=\left[
\begin{array}{cc}
\ \tilde{g}_{ij}+\tilde{N}_{i}^{a}~\tilde{N}_{j}^{b}\ \tilde{g}_{ab} & ~%
\tilde{N}_{j}^{e}\ \tilde{g}_{ae} \\
~\tilde{N}_{i}^{e}\ \tilde{g}_{be} & \tilde{g}_{ab}%
\end{array}%
\right] ,  \label{offd}
\end{equation}%
where the coefficients $\tilde{g}_{\underline{\alpha }\underline{\beta }}$
are those for the Finsler d--metric (\ref{slm}) re--defined with respect to
a coordinate co-basis, $du^{\underline{\alpha }}=(dx^{\underline{i}},dy^{%
\underline{a}}).$ The nonholonomic structure is encoded into vielbeins $%
\mathbf{\tilde{e}}_{\alpha }=\ \mathbf{\tilde{e}}_{\alpha }^{\ \underline{%
\alpha }}(u)\partial _{\underline{\alpha }}$ with coefficients
\begin{equation}
\ \mathbf{\tilde{e}}_{\alpha }^{\ \underline{\alpha }}(u)=\left[
\begin{array}{cc}
\ \tilde{e}_{i}^{\ \underline{i}}(u) & ~\tilde{N}_{i}^{b}(u)\ \tilde{e}%
_{b}^{\ \underline{a}}(u) \\
0 & \tilde{e}_{a}^{\ \underline{a}}(u)%
\end{array}%
\right] ,\   \label{vielb}
\end{equation}%
when $\tilde{g}_{ij}(u)=\tilde{e}_{i}^{\ \underline{i}}(u)\ \tilde{e}_{j}^{\
\underline{j}}(u)\eta _{\underline{i}\underline{j}}$, for $\eta _{\underline{%
i}\underline{j}}=diag[\pm 1,...\pm 1]$ fixing a corresponding local metric
signature on $TM.$

We conclude this section with the remark that the models of Finsler geometry on $%
TM$ \footnote{and/or any nonholonomic manifold $\mathbf{V}$ with N--connection splitting} with Cartan/ canonical d--connection, Berwald and/or Chern d--connections can be reconsidered equivalently as certain nonholonomic (pseudo) Riemannian ones endowed with nonholonomic $h$--$v$--splitting and corresponding unique distortions of $\tilde{\nabla}$. The distortion relations (\ref{distrelg})
play a crucial role in constructing models of Finsler--Ricci flow evolution uniquely related to standard theory of Ricci flows for Riemannian
geometries. The main theorems can be proven using $\tilde{\nabla}$ and then the results for Finsler flows are stated by ''uniquely'' defined
nonholonomic distortions and constraints.

\section{Finsler--Ricci Flows and  Distortions}
\label{s4}
In this section we show how a self--consistent approach to geometric flows with metric noncompatible connections can be elaborated if there are used special classes of  nonholonomic deformations/distortions of metric compatible flows.

\subsection{The Perelman's Functionals on Finsler Spaces}

 G. Perelman's idea \cite{gper1} was to derive the Ricci flow
equations of (pseudo) Riemannian geometries as gradient flows for some
functionals defined by the Levi--Civita connection $\nabla $ and respective scalar curvature $\ _{\nabla}R$. Considering a compact region $\mathcal{V}\subset TM$ (in general, we can take any
nonholonomic manifold $\mathbf{V}$ instead of $TM$),
with $\tilde{\nabla}$  computed for
$\ \tilde{g}_{\underline{\alpha }\underline{\beta }}$
(\ref{offd}). This family of geometric objects is induced by a family of Finsler generating function $F(\tau ,x,y)$ parametrized
by a flow parameter
$\tau \in \lbrack -\epsilon ,\epsilon ]\subset \mathbb{R}$ with a
sufficiently small $\epsilon >0.$ It is possible to introduce such
functionals in Finsler geometry (we use our system of denotations),
\begin{eqnarray}
\ _{\shortmid }\mathcal{F}(\mathbf{\tilde{g}},\tilde{\nabla},f) &=&\int_{%
\mathcal{V}}\left( \ _{\nabla }\tilde{R}+\left| \tilde{\nabla}f\right|
^{2}\right) e^{-f}\ dV,  \label{2pfrs} \\
\ _{\shortmid }\mathcal{W}(\mathbf{\tilde{g}},\tilde{\nabla},f,\tau )
&=&\int_{\mathcal{V}}\left[ \tau \left( \ _{\nabla }\tilde{R}+\left| \tilde{%
\nabla}f\right| \right) ^{2}+f-2n)\right] \mu \ dV,  \nonumber
\end{eqnarray}%
where $dV$ is the volume form of $\ \mathbf{g\sim \tilde{g}}$
(up to frame transforms), integration is taken over
$\mathcal{V},\dim \mathcal{V}=2n.$
Via frame transforms and for a parameter $\tau >0,$ we can fix $\int_{%
\mathcal{V}}dV=1$ when $\mu =\left( 4\pi \tau \right) ^{-n}e^{-f}.$  Working with $\tilde{\nabla},$ we can model in "not N--adapted" form different types of Ricci flow evolutions of Finsler geometries by imposing nonholonomic constraints with a distortion relation (\ref{distrelg}). In this approach, the Finsler--Ricci flows can be considered as evolving nonholonomic dynamical systems on the space of Riemannian metrics on $TM$ and the functionals $\ _{\shortmid }\mathcal{F}$ and $\ _{\shortmid }\mathcal{W}$ are of Lyapunov type. Levi--Civita Ricci flat configurations are defined as ''fixed'' on $\tau $ points of the corresponding dynamical systems.

The goal of this section is to re--define the functionals (\ref{2pfrs}) in
N--adapted form when the evolution of Finsler geometries with Sasaki
type metrics (\ref{slm}) on $\widetilde{TM}$ will be extracted
by a corresponding fixing $\ ^{F}\mathbf{D=}\ ^{Ch}\mathbf{D,}$
or$\ ^{B}\mathbf{D}$ (the
variants with $\ ^{F}\mathbf{D=}\ \mathbf{\tilde{D},}$ and/or $\mathbf{=}%
\widehat{\mathbf{D}}\mathbf{\ }$ where studied in Refs. \cite{vric4,vric6}).

\begin{lemma}
For a Finsler geometry model with d--connection $\ ^{F}\mathbf{D}$
completely determined by $F$ and $\mathbf{\tilde{g},}$ the Perelman's
functionals (\ref{2pfrs}) can be re--written equivalently in N--adapted form
by considering distortion relations for scalar curvature and Ricci tensor (%
\ref{districci}), {\small
\begin{eqnarray}
\ ^{F}\mathcal{F}(\mathbf{\tilde{g}},\ ^{F}\mathbf{D},\breve{f}) &=&\int_{%
\mathcal{V}}(\ _{s}^{F}R+|\ ^{F}\mathbf{D}\breve{f}|^{2})e^{-\breve{f}}\ dV,
\label{2npf1} \\
\ ^{F}\mathcal{W}(\mathbf{\tilde{g}},\ ^{F}\mathbf{D},\breve{f},\breve{\tau}%
) &=&\int_{\mathcal{V}}[\breve{\tau}(\ _{s}^{F}R+|\ ^{h}D\breve{f}|+|\ ^{v}D%
\breve{f}|)^{2}+\breve{f}-2n]\breve{\mu}dV,  \label{2npf2}
\end{eqnarray}%
} where the scalar curvature $\ _{s}^{F}R$ (\ref{riccifs}) is computed for $%
\ ^{F}\mathbf{D}=(\ _{h}^{F}D,\ \ _{v}^{F}D)$, $\left| \ ^{F}\mathbf{D}%
\breve{f}\right| ^{2}=\left| \ _{h}^{F}D\breve{f}\right| ^{2}+\left| \
_{v}^{F}D\breve{f}\right| ^{2},$ and the new scaling function $\breve{f}$
satisfies $\int_{\mathcal{V}}\breve{\mu}dV=1$ for $\breve{\mu}=\left( 4\pi
\breve{\tau}\right) ^{-n}e^{-\breve{f}}$ and $\breve{\tau}>0.$
\end{lemma}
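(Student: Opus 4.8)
The plan is to derive the N--adapted functionals (\ref{2npf1}) and (\ref{2npf2}) directly from (\ref{2pfrs}) by inserting the canonical distortion relation (\ref{distrelg}) and absorbing the resulting distortion terms into a redefinition of the scaling function. Since every geometric object is uniquely determined by $F$ and $\mathbf{\tilde{g}}$, no new data enters; the content of the Lemma is precisely that the two presentations describe the same integral over $\mathcal{V}$.

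First I would write $\tilde{\nabla}=\ ^{F}\mathbf{D}-\ ^{F}\mathbf{Z}$ from (\ref{distrelg}), where the distortion $\ ^{F}\mathbf{Z}$ is uniquely fixed by $F$. Contracting the curvature distortion (\ref{curvdist}) and the Ricci distortion (\ref{districci}) with $\mathbf{g}^{\beta\gamma}$ and using the definition (\ref{riccifs}), I obtain a scalar relation of the form $\ _{\nabla}\tilde{R}=\ _{s}^{F}R+\ ^{F}Z$, with induced scalar distortion $\ ^{F}Z:=\mathbf{g}^{\beta\gamma}\tilde{\mathbf{Z}}_{\beta\gamma}$. For the gradient term I would use that the covariant derivative of a scalar is connection independent, so $\tilde{\nabla}_{\alpha}f=\mathbf{e}_{\alpha}f=\ ^{F}\mathbf{D}_{\alpha}f$; because the metrics are related by the vielbein (\ref{vielb}) as $\mathbf{g}\sim\mathbf{\tilde{g}}$, the norm splits in N--adapted form, $|\tilde{\nabla}f|^{2}=|\ _{h}^{F}Df|^{2}+|\ _{v}^{F}Df|^{2}$, and the volume form $dV$ is common to both presentations.

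Substituting these into (\ref{2pfrs}), the integrand of $\ _{\shortmid}\mathcal{F}$ becomes $(\ _{s}^{F}R+|\ _{h}^{F}Df|^{2}+|\ _{v}^{F}Df|^{2}+\ ^{F}Z)e^{-f}$. The crucial step is to remove the scalar distortion term $\ ^{F}Z\,e^{-f}$. Following Perelman's technique, I would invoke the N--adapted integration by parts identity $\int_{\mathcal{V}}(\ ^{F}\Delta f)e^{-f}dV=\int_{\mathcal{V}}(|\ _{h}^{F}Df|^{2}+|\ _{v}^{F}Df|^{2})e^{-f}dV$, valid on compact $\mathcal{V}$ when boundary terms vanish, to trade $\ ^{F}Z$ against a shift $f\to\breve{f}$ of the scaling function that absorbs it into the weighted measure. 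The same reorganization, combined with the rescaling $\tau\to\breve{\tau}$ and $\mu\to\breve{\mu}=(4\pi\breve{\tau})^{-n}e^{-\breve{f}}$, yields (\ref{2npf2}), the normalization $\int_{\mathcal{V}}\breve{\mu}\,dV=1$ being imposed exactly as the normalization $\int_{\mathcal{V}}dV=1$ used for $\mu$.

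The main obstacle is this third step: one must verify that the scalar distortion $\ ^{F}Z$, integrated against the weight $e^{-f}dV$, reorganizes into an N--adapted total divergence plus a term proportional to $|\ ^{F}\mathbf{D}f|^{2}$, so that it can be eliminated by the substitution $f\to\breve{f}$ without changing the value of the functional. This requires that the h-- and v--components of $\ ^{F}\mathbf{Z}$ enter only through such divergence structures and that the boundary contributions over $\partial\mathcal{V}$ vanish (or that $\mathcal{V}$ is taken closed). Since $\ ^{F}\mathbf{Z}$ is uniquely built from $F$ through $\mathbf{\tilde{g}}$, this reduces to a computation internal to the Cartan data, and the equivalence follows under the standard assumptions on $\mathcal{V}$.
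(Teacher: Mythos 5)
Your proposal follows essentially the same route as the paper's proof: both insert the unique distortion relation (\ref{distrelg}) into (\ref{2pfrs}) and absorb the resulting distortions of the Ricci/scalar curvature by redefining $(f,\tau)\rightarrow(\breve{f},\breve{\tau})$ so that the leftover terms (the paper's $\Phi$ and $\Phi_{1}$) integrate to zero over $\mathcal{V}$. The step you flag as the main obstacle --- verifying that the scalar distortion reorganizes into an N--adapted divergence plus gradient--square terms --- is precisely the point the paper also leaves unverified, deferring it to Claim 3.1 of \cite{vric4} and to Perelman's original argument \cite{gper1}, so your sketch is no less complete than the paper's own.
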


\begin{proof}
The proof of this Lemma is similar to that for Claim 3.1 in Ref. \cite{vric4}
for nonholonomic manifolds (for a prescribed canonical d--connection). On $%
\widetilde{TM},$ such a statement transforms into a Lemma similar to that in
original Perelman's work \cite{gper1} if we consider models of Finsler
geometry with $\ ^{F}\mathbf{D}$  related to $\tilde{\nabla}$ via a unique
distortion relation (\ref{distrelg}). For simplicity, we can use $\breve{\tau}%
=\ ^{h}\tau =\ ^{v}\tau $ for a couple of possible $h$-- and $v$--flows
parameters, $\breve{\tau}=(\ ^{h}\tau ,\ ^{v}\tau ),$ and introduce a new
function $\breve{f},$ instead of $f.$ This scalar function is re--defined in
such a form that in formulas (\ref{2pfrs}) the distortion of Ricci tensor (%
\ref{districci}) and d--connection under $\tilde{\nabla}\rightarrow \ ^{F}%
\mathbf{D}$ results in
\begin{equation}
(\ _{\shortmid }\tilde{R}+|\tilde{\nabla}f|^{2})e^{-f}=(\ _{s}^{F}R+|\ ^{F}%
\mathbf{D}\breve{f}|^{2})e^{-\breve{f}}\ +\Phi  \label{rdef1}
\end{equation}%
for (\ref{2npf1}). Similarly, we re--scale the parameter $\tau \rightarrow
\breve{\tau}$ to have
\begin{equation}
\lbrack \tau (\ _{\shortmid }\tilde{R}+|\tilde{\nabla}f|)^{2}+f-2n)]\mu =[%
\breve{\tau}(\ _{s}^{F}R+|\ ^{h}D\breve{f}|+|\ ^{v}D\breve{f}|)^{2}+\breve{f}%
-2n]\breve{\mu}+\Phi _{1}  \label{rdef2}
\end{equation}%
for some $\Phi $ and $\Phi _{1}$ for which $\int_{\mathcal{V}}\Phi dV=0$
and $\int_{\mathcal{V}}\Phi _{1}dV=0.$ This results in
formula (\ref{2npf2}). Finally, in this proof, we conclude that both
in metric compatible and
noncompatible Finsler models uniquely determined by $F$ and $\mathbf{\tilde{g%
}}$, the Perelman functionals are certain nonholonomic deformations of those for $\tilde{\nabla}.$
\end{proof}

A similar proof with redefinition to a corresponding function $\breve{f}%
\rightarrow \underline{f}$ and parameter $\breve{\tau}\rightarrow \underline{%
\tau },$ can be used for proof of

\begin{corollary}
Fixing a point $x\in TM$ and a compact region $\mathcal{V}_{x}$ and via
distortions (\ref{nhdt11}), respectively, we can transform (\ref{2npf1}) and
(\ref{2npf2}) into {\small
\begin{eqnarray}
\ ^{F}\mathcal{F}(\tilde{g}_{ij},\ \tilde{D},\breve{f}) &=&\int_{\mathcal{V}%
}(\ \tilde{g}^{jk}\breve{R}ic_{jk}+|\tilde{D}\breve{f}|^{2})e^{-\underline{f}%
}\ dV,  \label{npf1} \\
\ ^{F}\mathcal{W}(\tilde{g}_{ij},\ \tilde{D},\breve{f},\breve{\tau})
&=&\int_{\mathcal{V}}[\underline{\tau }(\ \tilde{g}^{jk}\breve{R}ic_{jk}+|%
\tilde{D}\breve{f}|)^{2}+\underline{f}-n]\breve{\mu}dV,  \label{npf2}
\end{eqnarray}%
}defining a nonholonomic dynamics  related to Akbar--Zadeh definition of the
Ricci tensor  $\breve{R}ic_{jk}$ (\ref{ricciaz}).
\end{corollary}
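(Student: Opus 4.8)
The plan is to start from the N--adapted functionals (\ref{2npf1}) and (\ref{2npf2}) already established in the Lemma, specialize them to the Cartan d--connection $\ ^{F}\mathbf{D}=\mathbf{\tilde{D}}$, and then perform a second distortion governed by (\ref{nhdt11}) that trades the Cartan Ricci d--tensor $\tilde{R}_{\ jk}$ for the Akbar--Zadeh tensor $\breve{R}ic_{jk}$. First I would fix the point $x\in TM$ and the compact region $\mathcal{V}_{x}$ lying in the fiber $T_{x}M$; because all objects in (\ref{ricciaz}) are built from the Hessian (\ref{hessian}) and the semi--spray (\ref{smspr}) over a single base point, the effective integration is over the $n$--dimensional fiber rather than the $2n$--dimensional total space. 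This accounts for the replacement of $-2n$ by $-n$ in the $\mathcal{W}$--functional and for the corresponding re--normalization of the measure $\breve{\mu}$ entering (\ref{npf1}) and (\ref{npf2}).

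Next I would insert the splitting $\tilde{R}_{\ jk}=\breve{R}ic_{jk}-\tilde{Z}ic_{jk}$ from (\ref{nhdt11}) into the scalar curvature (\ref{riccifs}). On $\mathcal{V}_{x}$ only the horizontal contraction $\tilde{g}^{jk}\tilde{R}_{\ jk}$ survives as the relevant trace, so $\ _{s}^{F}R$ is replaced by $\tilde{g}^{jk}\breve{R}ic_{jk}$ up to the distortion density $\tilde{g}^{jk}\tilde{Z}ic_{jk}$ together with the mixed and vertical pieces that do not contribute on the fiber. Mirroring the identities (\ref{rdef1}) and (\ref{rdef2}) of the Lemma's proof, I would absorb all such leftover terms into a redefinition of the scaling function $\breve{f}\rightarrow\underline{f}$ and of the flow parameter $\breve{\tau}\rightarrow\underline{\tau}$, producing correction densities $\Phi$ and $\Phi_{1}$ chosen so that $\int_{\mathcal{V}_{x}}\Phi\,dV=0$ and $\int_{\mathcal{V}_{x}}\Phi_{1}\,dV=0$, with $\underline{f}$ then fixed by $\int_{\mathcal{V}_{x}}\breve{\mu}\,dV=1$. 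Since $\tilde{Z}ic_{jk}$ is uniquely determined by $F$ (and hence by $\tilde{g}_{ij}$), this redefinition is canonical and the functionals collapse precisely to (\ref{npf1}) and (\ref{npf2}).

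The main obstacle I anticipate is justifying that the distortion density $\tilde{g}^{jk}\tilde{Z}ic_{jk}$, together with the discarded vertical and mixed curvature contributions, can be written as a total--divergence--type term—after the integration by parts that, as in Perelman's original computation, transfers derivatives between $\ ^{F}\mathbf{D}\breve{f}$ and $e^{-\breve{f}}$—so that it integrates to zero on the closed region $\mathcal{V}_{x}$ and is reabsorbed into $\underline{f}$. This hinges on the homogeneity of degree $0$ of $\breve{R}$ in the fiber variables $y^{a}$ and on the fact that $\tilde{Z}ic_{jk}$ is an exact distortion uniquely induced by $F$ via (\ref{distrelg}). Once these are in place, the remaining manipulations are the same N--adapted integral identities used for the Lemma, now carried out over $\mathcal{V}_{x}\subset T_{x}M$, and the passage $\breve{f}\rightarrow\underline{f}$, $\breve{\tau}\rightarrow\underline{\tau}$ completes the argument, yielding a nonholonomic dynamics governed by $\breve{R}ic_{jk}$.
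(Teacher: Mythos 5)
Your proposal follows essentially the same route as the paper: the paper's entire proof is the one-line remark that ``a similar proof with redefinition to a corresponding function $\breve{f}\rightarrow \underline{f}$ and parameter $\breve{\tau}\rightarrow \underline{\tau}$'' applies, i.e.\ exactly your scheme of inserting the distortion (\ref{nhdt11}) into the Lemma's N--adapted functionals and absorbing the distortion densities into zero-integral corrections $\Phi$, $\Phi_{1}$ via the rescaling of the scalar function and flow parameter. Your added justifications (the fiber-dimension reasoning for $-2n\rightarrow -n$ and the divergence-type argument for why the leftover terms integrate to zero) are points the paper silently assumes rather than proves, so they strengthen rather than diverge from its argument.
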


In above formulas, integrals of type $\int_{\mathcal{V}}\{\ldots \}dV$ can
be transformed into computations on ''spherical'' bundle $SM,$ see details %
\cite{akbar,bao2,peyghan},
 $$\int_{SM}\{\ldots \}\frac{(-1)^{n(n-1)/2}}{(n-1)!}\upsilon \wedge (d\upsilon
)^{n-1}=\int_{SM}\{\ldots \}dV_{SM},$$
where the volume element $\upsilon $ is determined by $F$ following formula (%
\ref{volel}).

\subsection{On N--adapted geometric structures}

 Any geometric configuration and Ricci flow evolution formula for Riemannian metrics containing the Levi--Civita connection $\nabla $ can be transformed into its analogous on $TM$ for Finsler spaces following such rules:
\begin{enumerate}
\item Consider a $h$-$v$--splitting determined by $F(\chi ):=F(\chi ,u)$ via
flows of canonical N--connection $\mathbf{\tilde{N}}$ $(\chi )$ and adapted
frames $$
\partial _{\underline{\alpha }} \rightarrow \mathbf{e}_{\alpha }(\chi )=(%
\mathbf{e}_{i}(\chi )=\partial _{i}-N_{i}^{b}(\chi )\partial
_{b},e_{a}=\partial _{a}),$$ $$
du^{\underline{\alpha }} \rightarrow \mathbf{e}^{\alpha }(\chi )=\left(
e^{i}=dx^{i},e^{a}=dy^{a}+N_{k}^{a}(\chi )dx^{k}\right),$$
related to $\mathbf{\tilde{e}}_{\alpha }(\chi )$ (\ref{dder}) and $\mathbf{%
\tilde{e}}^{\alpha }(\chi )$ (\ref{ddif}) by any convenient
frame transforms.

\item Metrics $\ \tilde{g}_{\underline{\alpha }\underline{\beta }}$ $(\chi )$
(\ref{offd}) are transformed equivalently into d--metrics $\mathbf{\tilde{g}}%
(\chi )$ (\ref{slm}) and/or any related via frame transforms $\mathbf{g}$ (%
\ref{m1}).

\item Via distortion relations (\ref{distrelg}), we construct necessary
chains of distortions of connections,
  $\tilde{\nabla}(\chi )\rightarrow \nabla (\chi )\rightarrow \widehat{\mathbf{D%
}}(\chi )=\left( \ ^{h}D(\chi ),\ ^{v}D(\chi )\right) \rightarrow \ ^{F}%
\mathbf{D}(\chi )$,
where $\ ^{F}\mathbf{D=\tilde{D},=}\ ^{Ch}\mathbf{D,}$ or $\mathbf{=}\ ^{B}%
\mathbf{D.}$

\item Using such distortions of connections, we can compute distortions of
curvature tensors and related Ricci \ tensors (see (\ref{districci}), (\ref%
{nhdt21}), (\ref{nhdt22}) and (\ref{nhdt11})) and scalar curvatures.

\item Changing data $(f,\tau )\rightarrow (\breve{f},\breve{\tau})$ given by
formulas of type (\ref{rdef1}) and (\ref{rdef2}), we compute distortions of
the Perelman's functionals (\ref{2pfrs}), i.e $\ _{\shortmid }\mathcal{F}$
and $\ _{\shortmid }\mathcal{W}$ $,$ into $\ ^{F}\mathcal{F}$ \ and $\ ^{F}%
\mathcal{W},$ respectively, (\ref{2npf1}) and (\ref{2npf2}).
\end{enumerate}

In this work, we shall omit detailed proofs if they can be obtained using
metric compatible constructions in (pseudo) Riemannian and Lagrange--Finsler
geometry as in Refs. \cite{gper1,caozhu,vric3,vric4,vric6} following above stated rules.

\subsection{Hamilton equations for  metric noncompatible Finsler spa\-ces}

For the canonical d--connection $\widehat{\mathbf{D}}$ (similarly, for $\
\mathbf{\tilde{D})}$, we can construct the canonical Laplacian operator, $%
\widehat{\Delta }:=$ $\widehat{\mathbf{D}}$ $\widehat{\mathbf{D}}$ , h- and
v--components of the Ricci tensor, $\widehat{R}_{ij}$ and $\widehat{R}_{ab},$
and consider parameter $\tau (\chi ),$ $\partial \tau /\partial \chi =-1$
(for simplicity, we do not include the normalized term).

\begin{theorem}
\label{2theq1}The Finsler--Ricci flows for $\ ^{F}\mathbf{D}$ preserving a symmetric metric structure $\mathbf{g=\tilde{g}}$ and nonholonomic
constraints
\begin{equation}
\tilde{\nabla}=\ ^{F}\mathbf{D-}\ _{\nabla }^{F}\mathbf{Z,\ }\widehat{%
\mathbf{D}}=\ ^{F}\mathbf{D-}\ ^{F}\widehat{\mathbf{Z}},  \label{dista1}
\end{equation}%
resulting in distortions
\begin{eqnarray}
\widehat{\Delta } &=&\widehat{\mathbf{D}}_{\alpha }\ \widehat{\mathbf{D}}%
^{\alpha }=\ ^{F}\Delta +\ ^{Z}\widehat{\Delta },  \label{dista2} \\
\ ^{F}\Delta &=&\ ^{F}\mathbf{D}_{\alpha }\ ^{F}\mathbf{D}^{\alpha },\ \ ^{Z}%
\widehat{\Delta }=\ ^{F}\widehat{\mathbf{Z}}_{\alpha }\ ^{F}\widehat{\mathbf{%
Z}}^{\alpha }-[\ ^{F}\mathbf{D}_{\alpha }(\ ^{F}\widehat{\mathbf{Z}}^{\alpha
})+\ ^{F}\widehat{\mathbf{Z}}_{\alpha }(\ ^{F}\mathbf{D}^{\alpha })];  \nonumber
\\
\widehat{\mathbf{R}}_{\ \beta \gamma } &=&\ ^{F}\mathbf{R}_{\ \beta \gamma
}-\ ^{F}\widehat{\mathbf{Z}}ic_{\beta \gamma },\ \ _{s}\widehat{R}=\
_{s}^{F}R-\mathbf{g}^{\beta \gamma }\ ^{F}\widehat{\mathbf{Z}}ic_{\beta
\gamma }=\ _{s}^{F}R-\ _{s}^{F}\widehat{\mathbf{Z}},  \nonumber \\
\ _{s}^{F}\widehat{\mathbf{Z}}&= &\mathbf{g}^{\beta \gamma }\ ^{F}\widehat{%
\mathbf{Z}}ic_{\beta \gamma }=\ _{h}^{F}\widehat{Z}+\ _{v}^{F}\widehat{Z},\
_{h}^{F}\widehat{Z}=g^{ij}\ ^{F}\widehat{\mathbf{Z}}ic_{ij},\ _{v}^{F}%
\widehat{Z}=h^{ab}\ ^{F}\widehat{\mathbf{Z}}ic_{ab};  \nonumber \\
\ _{s}^{F}R &=&\ _{h}^{F}R+\ _{v}^{F}R,\ \ _{h}^{F}R:=g^{ij}\ ^{F}R_{ij},\
_{v}^{F}R=h^{ab}\ ^{F}R_{ab},  \nonumber
\end{eqnarray}%
can be characterized by two equivalent systems of geometric flow  equations:
\begin{enumerate}
\item Evolution with distortions of the canonical d--connections introduced
in metric compatible nonholonomic Ricci flow equations,
\begin{eqnarray}
\frac{\partial g_{ij}}{\partial \chi } &=&-2\left( \ ^{F}\mathbf{R}_{\ ij}-\
^{F}\widehat{\mathbf{Z}}ic_{ij}\right) ,\ \frac{\partial g_{ab}}{\partial
\chi }=-2\left( \ ^{F}\mathbf{R}_{\ ij}-\ ^{F}\widehat{\mathbf{Z}}%
ic_{ij}\right),  \nonumber \\
\ ^{F}\mathbf{R}_{\ ia} &=&\ ^{F}\widehat{\mathbf{Z}}ic_{ia},\ \ ^{F}\mathbf{%
R}_{\ ai}=\ ^{F}\widehat{\mathbf{Z}}ic_{ai},\   \label{frham1a} \\
\ \frac{\partial \widehat{f}}{\partial \chi } &=&-\left( \ ^{F}\Delta +\ ^{Z}%
\widehat{\Delta }\right) \widehat{f}+\left| \left( \ ^{F}\mathbf{D}-\ ^{F}%
\widehat{\mathbf{Z}}\right) \widehat{f}\right| ^{2}-\ _{s}^{F}R+\ _{s}^{F}%
\widehat{\mathbf{Z}},  \nonumber
\end{eqnarray}%
and
 {\small $\frac{\partial }{\partial \chi }\widehat{\mathcal{F}}(\mathbf{g,\mathbf{\ }%
\widehat{\mathbf{D}},}\widehat{f}) =
 2\int_{\mathcal{V}}[|\ ^{F}R_{\ ij}-\ ^{F}\widehat{\mathbf{Z}}ic_{ij}+(\
^{F}\mathbf{D}_{i}\mathbf{-}\ ^{F}\widehat{\mathbf{Z}}_{i})(\ ^{F}\mathbf{D}%
_{j}\mathbf{-}\ ^{F}\widehat{\mathbf{Z}}_{j})\widehat{f}|^{2}+ |\ ^{F}R_{ab}-\ ^{F}\widehat{\mathbf{Z}}ic_{ab}+(\ ^{F}\mathbf{D}_{a}-\
^{F}\widehat{\mathbf{Z}}_{a})(\ ^{F}\mathbf{D}_{b}\mathbf{-}\ ^{F}\widehat{%
\mathbf{Z}}_{b})\widehat{f}|^{2}]e^{-\widehat{f}}dV,$}
when $\int_{\mathcal{V}}e^{-\widehat{f}}dV$ is constant.

\item Evolution derived from distorted Perelman's functional $\ ^{F}\mathcal{%
F}(\mathbf{\tilde{g}},\ ^{F}\mathbf{D},\breve{f})$ (\ref{2npf1}),
\begin{eqnarray}
\frac{\partial g_{ij}}{\partial \chi } &=&-2\left( \ ^{F}\mathbf{R}_{\ ij}-\
^{F}\widehat{\mathbf{Z}}ic_{ij}\right) ,\ \frac{\partial g_{ab}}{\partial
\chi }=-2\left( \ ^{F}\mathbf{R}_{\ ij}-\ ^{F}\widehat{\mathbf{Z}}%
ic_{ij}\right) ,  \nonumber \\
\ ^{F}\mathbf{R}_{\ ia} &=&\ ^{F}\widehat{\mathbf{Z}}ic_{ia},\ \ ^{F}\mathbf{%
R}_{\ ai}=\ ^{F}\widehat{\mathbf{Z}}ic_{ai},\   \label{frham1b} \\
\ \frac{\partial \breve{f}}{\partial \chi } &=&-\ ^{F}\Delta \breve{f}%
+\left| \ ^{F}\mathbf{D}\breve{f}\right| ^{2}-\ _{s}^{F}R\ ,  \nonumber
\end{eqnarray}%
and the property that
   $\frac{\partial }{\partial \chi }\ ^{F}\mathcal{F}(\mathbf{\tilde{g}},\ ^{F}%
\mathbf{D},\breve{f})=$\newline $2\int_{\mathcal{V}}[|\ ^{F}\mathbf{R}_{\ \beta \gamma
}+\ ^{F}\mathbf{D}_{\beta }\ ^{F}\mathbf{D}_{\gamma }\breve{f}|^{2}]e^{-%
\breve{f}}dV,$ when $\int_{\mathcal{V}}e^{-\breve{f}}dV=const$.
\end{enumerate}
\end{theorem}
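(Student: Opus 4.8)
The plan is to reduce everything to the metric compatible case and then transport the result through the unique distortion $\ ^{F}\widehat{\mathbf{Z}}$. Since $\ ^{F}\mathbf{D}$, $\widehat{\mathbf{D}}$ and $\tilde{\nabla}$ are all completely determined by $F$ and $\mathbf{\tilde{g}}$ and are linked by the distortion chain (\ref{distrelg}), the strategy is first to establish the flow equations and monotonicity for the metric compatible canonical d--connection $\widehat{\mathbf{D}}$, which reproduce Perelman's construction in N--adapted frames exactly as in Refs. \cite{gper1,vric3,vric4,vric6}, and then to substitute the distortion relations (\ref{dista1}) so as to re--express each geometric object through $\ ^{F}\mathbf{D}$. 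As a first step I would verify the distortion formulas (\ref{dista2}): substituting $\widehat{\mathbf{D}}=\ ^{F}\mathbf{D}-\ ^{F}\widehat{\mathbf{Z}}$ into $\widehat{\Delta}=\widehat{\mathbf{D}}_{\alpha}\widehat{\mathbf{D}}^{\alpha}$ gives the split $\ ^{F}\Delta+\ ^{Z}\widehat{\Delta}$ after collecting the first-- and second--order terms in $\ ^{F}\widehat{\mathbf{Z}}$, while the Ricci and scalar curvature splittings are the contractions (\ref{districci}) of the curvature distortion (\ref{curvdist}). This step is routine algebra in N--adapted frames.

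The crux of both systems is the observation that $\ ^{F}\mathbf{R}_{ij}-\ ^{F}\widehat{\mathbf{Z}}ic_{ij}$ equals the canonical Ricci d--tensor $\widehat{R}_{ij}$, which is symmetric because $\widehat{\mathbf{D}}\mathbf{g}=0$. For Part 1, I would start from the metric compatible Hamilton flow $\partial g_{ij}/\partial\chi=-2\widehat{R}_{ij}$, $\partial g_{ab}/\partial\chi=-2\widehat{R}_{ab}$ together with the constraints forcing the off--diagonal canonical Ricci components $\widehat{R}_{ia},\widehat{R}_{ai}$ to vanish, so that the Sasaki--type block structure of $\mathbf{g}=\mathbf{\tilde{g}}$ is preserved. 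Rewriting $\widehat{R}$, $\widehat{\Delta}$ and $\ _{s}\widehat{R}$ through (\ref{dista2}) turns these into the system (\ref{frham1a}), with the distortion terms $\ ^{Z}\widehat{\Delta}$, $\ ^{F}\widehat{\mathbf{Z}}$ and $\ _{s}^{F}\widehat{\mathbf{Z}}$ appearing explicitly in the $\widehat{f}$--equation; the monotonicity formula for $\widehat{\mathcal{F}}$ follows by the same substitution into the known metric compatible monotonicity identity.

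For Part 2, I would instead take the distorted functional $\ ^{F}\mathcal{F}(\mathbf{\tilde{g}},\ ^{F}\mathbf{D},\breve{f})$ of (\ref{2npf1}) and compute its first variation directly, following Perelman's gradient--flow argument adapted to the N--connection splitting. Because (\ref{2npf1}) is written entirely in terms of $\ _{s}^{F}R$ and $\ ^{F}\mathbf{D}$, the distortions get absorbed into the scaling function $\breve{f}$ and the Laplacian no longer carries an explicit $\ ^{Z}\widehat{\Delta}$ correction; the gradient conditions then yield the simpler $\breve{f}$--evolution in (\ref{frham1b}) and the monotonicity $\partial_{\chi}\ ^{F}\mathcal{F}=2\int_{\mathcal{V}}|\ ^{F}\mathbf{R}_{\ \beta\gamma}+\ ^{F}\mathbf{D}_{\beta}\ ^{F}\mathbf{D}_{\gamma}\breve{f}|^{2}e^{-\breve{f}}dV$. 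Equivalence of the two systems then follows from the preceding Lemma: $\widehat{\mathcal{F}}$ and $\ ^{F}\mathcal{F}$ differ only by terms integrating to zero (the $\Phi,\Phi_{1}$ of (\ref{rdef1})--(\ref{rdef2})), and the substitutions $\widehat{f}\leftrightarrow\breve{f}$, $\tau\leftrightarrow\breve{\tau}$ convert one $f$--evolution into the other while leaving the metric equations literally identical.

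The hard part will be the symmetry/consistency check rather than the variational computation. Since $\ ^{F}\mathbf{D}$ is metric noncompatible, $\ ^{F}\mathbf{R}_{\ \beta\gamma}$ is generically non--symmetric, so a naive flow driven by $\ ^{F}\mathbf{R}$ would not preserve $\mathbf{g}=\mathbf{\tilde{g}}$. The theorem works precisely because the distortion $\ ^{F}\widehat{\mathbf{Z}}ic_{\beta\gamma}$ carries exactly the non--symmetric and off--diagonal parts needed so that $\ ^{F}\mathbf{R}_{\beta\gamma}-\ ^{F}\widehat{\mathbf{Z}}ic_{\beta\gamma}=\widehat{\mathbf{R}}_{\beta\gamma}$ is symmetric with vanishing $h$--$v$ blocks under the constraints. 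I would therefore devote the main effort to showing, from the uniqueness of the distortion (\ref{distrelg}) and the metric compatibility $\widehat{\mathbf{D}}\mathbf{g}=0$, that the constraint equations $\ ^{F}\mathbf{R}_{ia}=\ ^{F}\widehat{\mathbf{Z}}ic_{ia}$, $\ ^{F}\mathbf{R}_{ai}=\ ^{F}\widehat{\mathbf{Z}}ic_{ai}$ are self--consistent and are propagated along the flow, i.e. that the symmetric block--diagonal Sasaki form is an invariant of the evolution.
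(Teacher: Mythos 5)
Your proposal follows essentially the same route as the paper's proof: reduce everything to the metric compatible canonical d--connection flow (Perelman's construction in N--adapted frames, as in the cited references \cite{gper1,caozhu,vric4,vric6}), rewrite each operator through the unique distortions (\ref{dista1})--(\ref{dista2}), impose $\ ^{F}\mathbf{R}_{\ ia}=\ ^{F}\widehat{\mathbf{Z}}ic_{ia}$, $\ ^{F}\mathbf{R}_{\ ai}=\ ^{F}\widehat{\mathbf{Z}}ic_{ai}$ (i.e. $\widehat{R}_{ia}=\widehat{R}_{ai}=0$) so the evolving metric stays symmetric, and use the Lemma's rescalings $(\widehat{f},\breve{f})$ to pass between $\widehat{\mathcal{F}}$ and $\ ^{F}\mathcal{F}$ rather than varying the noncompatible functional from scratch. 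Your extra plan to prove that the symmetry constraints are propagated by the flow goes beyond what the paper does (it simply imposes them), but this is an addition, not a divergence.
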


\begin{proof}
The distortions (\ref{distrelg}) can be written in an equivalent form (\ref%
{dista1}) which allows us to compute respective splitting for Laplacians
and, following formula (\ref{riccifs}), the decomposition of necessary types
Ricci and scalar curvature operators (\ref{dista2}). This reduces the
constructions to a corresponding system of Ricci flow evolution equations
for $\mathbf{\mathbf{\ }\widehat{\mathbf{D}},}$ see proofs in Refs. \cite%
{vric4,vric6},
\begin{equation}
\frac{\partial g_{ij}}{\partial \chi } =-2\widehat{R}_{ij},\ \frac{%
\partial \underline{g}_{ab}}{\partial \chi }=-2\widehat{R}_{ab},\
 \frac{\partial \widehat{f}}{\partial \chi } =-\widehat{\Delta }\widehat{f%
}+\left| \widehat{\mathbf{D}}\widehat{f}\right| ^{2}-\ ^{h}\widehat{R}-\ ^{v}%
\widehat{R},  \label{rfcandc}
\end{equation}%
derived from the functional $\widehat{\mathcal{F}}(\mathbf{\tilde{g}},%
\widehat{\mathbf{D}},\widehat{f})=\int_{\mathcal{V}}(\ _{s}\widehat{R}+|%
\widehat{\mathbf{D}}\widehat{f}|^{2})e^{-\widehat{f}}\ dV$.

Such metric compatible canonical Finsler--Ricci flow equations are
equivalent (via nonholonomic transforms $\nabla \rightarrow \widehat{\mathbf{%
D}}$) to those proposed for Riemannian spaces by G. Perelman \cite{gper1}
(details of the proof with $\nabla $ are given in Proposition 1.5.3 of \cite%
{caozhu}). We must impose the conditions $\widehat{R}_{ia}=0$ and $\widehat{R%
}_{ai}=0$ if we wont to keep the total metric to be symmetric under Ricci
evolution. If such conditions are not satisfied, we generate nonsymmetric
metrics because  the Ricci tensor may be nonsymmetric for Finsler spaces, see details in \cite{vric5}.
The system of equations (\ref{frham1a}) is just that for the canonical
d--connection (\ref{rfcandc}) but rewritten in terms of (in general, metric
noncompatible) $\ ^{F}\mathbf{D.}$ This means that we can follow a metric
noncompatible evolution derived from a Perlman type functional $\widehat{%
\mathcal{F}}$ formulated in terms of the canonical d--connection and
respective scalar function $\widehat{f}.$

Finally, we note that the functional $\widehat{\mathcal{F}}(\mathbf{g,%
\widehat{\mathbf{D}},}\widehat{f})$ is nondecreasing in time and the
monotonicity is strict unless we are on a steady N--adapted gradient
solution (see details in \cite{vric4}). This property may not ''survive''
under nonholonomic deformations to certain $\ ^{F}\mathbf{D}$. This is not
surprising for metric noncompatible geometric evolutions. Such
distortions can be computed in unique forms due to relations (\ref{distrelg}%
) and kept under control via nonholonomic constraints which allows us to
construct  $\ ^{F}\mathcal{F}(\mathbf{\tilde{g}},\ ^{F}\mathbf{D},\breve{f}%
)$ and derive metric noncompatible evolution equations (\ref{frham1b}).
\end{proof}

The above theorem can be reformulated in terms of distortions from the
Cartan d--connection, when $%
\mathbf{\tilde{D}}=\ ^{F}\mathbf{D-}\ ^{F}\mathbf{\tilde{Z}}$ is used in (\ref{dista1})
( instead of $\mathbf{\ }\widehat{\mathbf{D}}=\ ^{F}%
\mathbf{D-}\ ^{F}\widehat{\mathbf{Z}}$).   To consider an almost K\"{a}hler model of Cartan--Finsler space is important because  following such an approach we work with almost symplectic variables, see an explicit construction in section   \ref{akmfg}. This way, it is possible to perform deformation quantization of the Finsler--Ricci flow theory \cite{vdefqfl} and develop noncommutative models \cite{vric7} applying standard geometric quantization methods.

The Finsler--Ricci evolution equations derived in this work are with respect
to N--adapted frames (\ref{dder}) and (\ref{ddif}) which in their turn are
subjected to geometric evolution. Using vielbein parametrizations (\ref%
{vielb}) and similar formulas for Riemannian spaces \cite%
{caozhu,kleiner,rbook} (see also models of geometric evolution with
N--connections in \cite{vric3,vric4}),

\begin{corollary}
The evolution, for all time $\tau \in \lbrack 0,\tau _{0}),$ of N--adapted
frames in a Finsler space,
 $ \mathbf{\tilde{e}}_{\alpha }(\tau )=\ \mathbf{\tilde{e}}_{\alpha }^{\
\underline{\alpha }}(\tau ,u)\partial _{\underline{\alpha }},$
 up to frame/coordinate transforms, is defined by the coefficients
 \begin{eqnarray*} \mathbf{\tilde{e}}_{\alpha }^{\ \underline{\alpha }}(\tau ,u) &=&\left[
\begin{array}{cc}
\ e_{i}^{\ \underline{i}}(\tau ,u) & ~\tilde{N}_{i}^{b}(\tau ,u)\ e_{b}^{\
\underline{a}}(\tau ,u) \\
0 & \ e_{a}^{\ \underline{a}}(\tau ,u)%
\end{array}%
\right] ,\\
\mathbf{\tilde{e}}_{\ \underline{\alpha }}^{\alpha }(\tau ,u)\ &=&\left[
\begin{array}{cc}
e_{\ \underline{i}}^{i}=\delta _{\underline{i}}^{i} & e_{\ \underline{i}%
}^{b}=-\tilde{N}_{k}^{b}(\tau ,u)\ \ \delta _{\underline{i}}^{k} \\
e_{\ \underline{a}}^{i}=0 & e_{\ \underline{a}}^{a}=\delta _{\underline{a}%
}^{a}%
\end{array}%
\right],\end{eqnarray*}
with
 $ \tilde{g}_{ij}(\tau )=\ e_{i}^{\ \underline{i}}(\tau ,u)\ e_{j}^{\
\underline{j}}(\tau ,u)\eta _{\underline{i}\underline{j}}$  and
 $\tilde{g}_{ab}(\tau )=\ e_{a}^{\ \underline{a}}(\tau ,u)\ e_{b}^{\
\underline{b}}(\tau ,u)\eta _{\underline{a}\underline{b}}$,
 where $\eta _{\underline{i}\underline{j}}=diag[\pm 1,...\pm 1]$ and $\eta _{%
\underline{a}\underline{b}}=diag[\pm 1,...\pm 1]$ fix a signature of $\
\mathbf{\tilde{g}}_{\alpha \beta }^{[0]}(u),$ is given by equations
\begin{equation}
\frac{\partial }{\partial \tau }\mathbf{\tilde{e}}_{\ \underline{\alpha }%
}^{\alpha }\ =\ \mathbf{\tilde{g}}^{\alpha \beta }~\widehat{\mathbf{R}}%
_{\beta \gamma }~\ \mathbf{\tilde{e}}_{\ \underline{\alpha }}^{\gamma }
\label{aeq5}
\end{equation}%
if we prescribe that the geometric constructions are derived by the
canonical d--connection.
\end{corollary}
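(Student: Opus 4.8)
The final statement is a Corollary describing how the N-adapted frame coefficients $\mathbf{\tilde{e}}_{\ \underline{\alpha}}^{\alpha}(\tau,u)$ evolve under the Finsler--Ricci flow, claiming the evolution equation (\ref{aeq5}), namely $\frac{\partial}{\partial\tau}\mathbf{\tilde{e}}_{\ \underline{\alpha}}^{\alpha} = \mathbf{\tilde{g}}^{\alpha\beta}\,\widehat{\mathbf{R}}_{\beta\gamma}\,\mathbf{\tilde{e}}_{\ \underline{\alpha}}^{\gamma}$, valid when the canonical d--connection $\widehat{\mathbf{D}}$ governs the geometric constructions. The plan is to start from the already-established metric-flow equations (\ref{rfcandc}), namely $\partial_\chi g_{ij} = -2\widehat{R}_{ij}$ and $\partial_\chi g_{ab} = -2\widehat{R}_{ab}$ (with $\tau=\tau(\chi)$ satisfying $\partial\tau/\partial\chi = -1$), and push them through the vielbein parametrization $\tilde{g}_{ij}(\tau) = e_i^{\ \underline{i}} e_j^{\ \underline{j}}\eta_{\underline{i}\underline{j}}$ and $\tilde{g}_{ab}(\tau) = e_a^{\ \underline{a}} e_b^{\ \underline{b}}\eta_{\underline{a}\underline{b}}$ with constant signature matrices $\eta$.

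\textbf{Key steps.} First I would differentiate the factorizations $\tilde{g}_{\alpha\beta}(\tau) = \mathbf{\tilde{e}}_\alpha^{\ \underline{\alpha}}\,\mathbf{\tilde{e}}_\beta^{\ \underline{\beta}}\,\eta_{\underline{\alpha}\underline{\beta}}$ with respect to $\tau$. Since the $\eta$'s are constant, the product rule gives $\partial_\tau \tilde{g}_{\alpha\beta} = (\partial_\tau \mathbf{\tilde{e}}_\alpha^{\ \underline{\alpha}})\,\mathbf{\tilde{e}}_\beta^{\ \underline{\beta}}\,\eta_{\underline{\alpha}\underline{\beta}} + \mathbf{\tilde{e}}_\alpha^{\ \underline{\alpha}}\,(\partial_\tau \mathbf{\tilde{e}}_\beta^{\ \underline{\beta}})\,\eta_{\underline{\alpha}\underline{\beta}}$. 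Converting the $\chi$-flow to the $\tau$-flow (a sign flip, absorbed into the normalization), the left side equals $2\widehat{R}_{\alpha\beta}$ after assembling the $h$- and $v$-blocks into the single Ricci d--tensor $\widehat{\mathbf{R}}_{\beta\gamma}$. Second, I would make the standard ansatz that the orthonormal-frame evolution is diagonal in the sense that $\partial_\tau \mathbf{\tilde{e}}_\alpha^{\ \underline{\alpha}}$ is proportional to a Ricci-contraction of the frame itself; substituting the proposed law and checking it reproduces $2\widehat{R}_{\alpha\beta}$ on the metric confirms consistency. Third, I would pass from the frame coefficients $\mathbf{\tilde{e}}_\alpha^{\ \underline{\alpha}}$ to their inverses $\mathbf{\tilde{e}}_{\ \underline{\alpha}}^{\alpha}$ using $\mathbf{\tilde{e}}_\alpha^{\ \underline{\alpha}}\,\mathbf{\tilde{e}}_{\ \underline{\beta}}^{\alpha} = \delta_{\underline{\beta}}^{\underline{\alpha}}$; differentiating this identity yields $\partial_\tau \mathbf{\tilde{e}}_{\ \underline{\alpha}}^{\alpha} = -\mathbf{\tilde{e}}_{\ \underline{\alpha}}^{\beta}\,(\partial_\tau \mathbf{\tilde{e}}_\beta^{\ \underline{\beta}})\,\mathbf{\tilde{e}}_{\ \underline{\beta}}^{\alpha}$, and raising one index with $\mathbf{\tilde{g}}^{\alpha\beta}$ brings this into the claimed form (\ref{aeq5}). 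Throughout, I would invoke the analogous construction for Riemannian spaces in Refs. \cite{caozhu,kleiner,rbook} and the N-connection-adapted versions in \cite{vric3,vric4}, so that only the bookkeeping for the $h$-$v$-split differs.

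\textbf{Main obstacle and its resolution.} The main subtlety is that the evolution law for orthonormal frames is not uniquely determined by the metric flow alone: one may compose with an arbitrary $\tau$-dependent rotation in the $\eta$-orthonormal structure group, so the frame equation is only fixed up to a gauge. The phrase ``up to frame/coordinate transforms'' in the statement is precisely the acknowledgement that we select the \emph{symmetric} (rotation-free) gauge, in which $\partial_\tau \mathbf{\tilde{e}}$ carries no antisymmetric part. I would therefore argue that, having fixed this canonical gauge, the symmetric part of $\partial_\tau \mathbf{\tilde{e}}_\alpha^{\ \underline{\alpha}}$ is pinned down by $\partial_\tau \tilde{g}_{\alpha\beta} = -2\widehat{R}_{\alpha\beta}$, and this determines the frame evolution uniquely, yielding (\ref{aeq5}). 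The one technical point requiring care is the non-symmetry of the Finsler Ricci d--tensor noted earlier in the excerpt: the off-diagonal constraints $\widehat{R}_{ia}=0=\widehat{R}_{ai}$ imposed in Theorem \ref{2theq1} to preserve a symmetric metric are exactly what guarantee that $\widehat{\mathbf{R}}_{\beta\gamma}$ contracts consistently against the block-triangular vielbein (\ref{vielb}) without generating spurious antisymmetric frame motion, so I would explicitly cite those constraints when assembling the blocks.
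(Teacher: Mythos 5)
Your proposal is correct and is essentially the paper's own argument: the paper "proves" this Corollary simply by invoking the standard evolving--orthonormal--frame (Uhlenbeck--type) formulas of Riemannian Ricci flow from \cite{caozhu,kleiner,rbook}, transplanted to the block--triangular N--adapted vielbeins (\ref{vielb}) by the transformation rules of the preceding subsection and the N--connection versions in \cite{vric3,vric4}, which is exactly the computation you write out explicitly (differentiate the factorization of $\mathbf{\tilde{g}}$ through $\eta _{\underline{\alpha }\underline{\beta }}$ along the flow, fix the rotation gauge, pass to the inverse vielbein, and use $\widehat{R}_{ia}=\widehat{R}_{ai}=0$ together with the symmetry of $\widehat{R}_{ij},\widehat{R}_{ab}$). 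One bookkeeping caveat: with your stated convention $\partial _{\tau }\tilde{g}_{\alpha \beta }=+2\widehat{\mathbf{R}}_{\alpha \beta }$ your chain actually lands on the negative of (\ref{aeq5}), so $\tau $ must be read as the forward flow parameter (i.e. $\partial _{\tau }\tilde{g}_{\alpha \beta }=-2\widehat{\mathbf{R}}_{\alpha \beta }$) for the signs to match --- an ambiguity the paper itself leaves unresolved, since its earlier definition $\partial \tau /\partial \chi =-1$ would likewise flip the sign in (\ref{aeq5}).
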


Finally, we emphasize that $\mathbf{g}^{\alpha \beta }~\widehat{\mathbf{R}}%
_{\beta \gamma }=g^{ij}\widehat{R}_{ij}+g^{ab}\widehat{R}_{ab}$ in (\ref%
{aeq5}) selects for evolution only the symmetric components of the Ricci
d--tensor for the canonical d--connection. The formulas for a distortion $%
\widehat{\mathbf{R}}_{\ \beta \gamma }=\ ^{F}\mathbf{R}_{\ \beta \gamma }-\
^{F}\widehat{\mathbf{Z}}ic_{\beta \gamma }$ allow us to compute flow
contributions defined by metric noncompatibe flows with $\ ^{F}\mathbf{R}_{\
\beta \gamma }.$

\subsection{Statistical analogy and thermodynamics of  Finsler--Ricci flows}
The functional $\ _{\shortmid }\mathcal{W}$ is in a\ sense
analogous to minus entropy \cite{gper1} and this property was proven for
metric compatible Finsler--Ricci flows \cite{vric4,vric6} with functionals $%
\widehat{\mathcal{W}}$ \ and/or $\mathcal{\tilde{W}}$, respectively written
for $\widehat{\mathbf{D}}$ and $\mathbf{\tilde{D}}.$ This allows us to
associate some thermodynamical values characterizing (non) holonomic  geometric evolution.
The aim of this section is to show how a statistical/thermodynamic analogy
can be provided for metric noncompatible Ricci flows.

For the functionals $\widehat{\mathcal{W}}$ and $\ ^{F}\mathcal{W}$ (\ref%
{2npf2}), we can prove two systems of equations as in Theorem \ref{2theq1}
(we omit such considerations in this work). For simplicity, we provide an
equivalent result  stated for $\mathcal{\tilde{W}}.$
\begin{theorem}
\label{2theveq}For any d--metric $\mathbf{g}(\chi )$ (\ref{m1}), $\mathbf{%
\tilde{D}}=\ ^{F}\mathbf{D-}\ ^{F}\mathbf{\tilde{Z},}$ and functions $%
\widehat{f}(\chi )$ and $\widehat{\tau }(\chi )$ being  solutions of the
system of equations%
\begin{eqnarray*}\frac{\partial g_{ij}}{\partial \chi } &=& -2\left( \ ^{F}R_{\ ij}-\ ^{F}%
\mathbf{\tilde{Z}}ic_{ij}\right),\
\frac{\partial g_{ab}}{\partial \chi } =-2\left( \ ^{F}R_{\ ab}-\ ^{F}%
\mathbf{\tilde{Z}}ic_{ab}\right),\\
\ \frac{\partial \tilde{f}}{\partial \chi } &=& -(\ ^{F}\Delta +\ ^{Z}\tilde{%
\Delta})\tilde{f}+\left| (\ ^{F}\mathbf{D-}\ ^{F}\mathbf{\tilde{Z}})\tilde{f}%
\right| ^{2}-\ \ _{s}\tilde{R}\ +\frac{2n}{\hat{\tau}},\
\frac{\partial \tilde{\tau}}{\partial \chi } = -1,
\end{eqnarray*}
 it is satisfied the condition
$\frac{\partial }{\partial \chi }\mathcal{\tilde{W}}(\mathbf{g}(\chi )%
\mathbf{,}\tilde{f}(\chi ),\tilde{\tau}(\chi ))=2\int_{\mathcal{V}}\tilde{%
\tau}[|\ ^{F}\mathbf{R}_{\alpha \beta }-\ ^{F}\mathbf{\tilde{Z}}ic_{\alpha
\beta }+ (\ ^{F}\mathbf{D}_{\alpha }\mathbf{-}\ ^{F}\mathbf{\tilde{Z}}_{\alpha })(\
^{F}\mathbf{D}_{\alpha }\mathbf{-}\ ^{F}\mathbf{\tilde{Z}}_{\alpha })\tilde{f%
}-\frac{1}{2\tilde{\tau}}\mathbf{\tilde{g}}_{\alpha \beta }|^{2}](4\pi
\tilde{\tau})^{-n}e^{-\tilde{f}}dV,$
 for $\int_{\mathcal{V}}e^{-\tilde{f}}dV=const$. This functional is
N--adapted nondecreasing if it is both h-- and v--nondecreasing.
\end{theorem}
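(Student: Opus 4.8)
The plan is to reduce the statement to the metric compatible $\mathcal{W}$--monotonicity for the Cartan d--connection $\mathbf{\tilde{D}}$ and then to re--express everything through the distortion $\mathbf{\tilde{D}}=\ ^{F}\mathbf{D-}\ ^{F}\mathbf{\tilde{Z}}$ used in (\ref{dista1}). First I would note that, because $\mathbf{\tilde{D}}$ is metric compatible, $\mathbf{\tilde{D}\tilde{g}}=0$, the functional $\mathcal{\tilde{W}}$ in N--adapted form (that is, (\ref{2npf2}) specialized to $\ ^{F}\mathbf{D}=\mathbf{\tilde{D}}$) coincides, up to the vielbein re--parametrization (\ref{vielb}) and the distortion $\mathbf{\tilde{D}}=\tilde{\nabla}+\mathbf{\tilde{Z}}$ of (\ref{dcartdc}), with Perelman's entropy $\ _{\shortmid }\mathcal{W}$ of (\ref{2pfrs}) for the Levi--Civita connection $\tilde{\nabla}$ on the Riemannian metric $\tilde{g}_{\underline{\alpha }\underline{\beta }}$ (\ref{offd}). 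Hence the core computation is Perelman's classical one \cite{gper1}, carried out in N--adapted frames exactly as in our former metric compatible constructions \cite{vric4,vric6} and in the Riemannian exposition of \cite{caozhu}. This is completely parallel to the treatment of the $\mathcal{F}$--functional in Theorem \ref{2theq1}.

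Second, I would perform the variation directly with $\mathbf{\tilde{D}}$. By the Cartan analog of the Laplacian splitting (\ref{dista2}), namely $\tilde{\Delta}=\ ^{F}\Delta +\ ^{Z}\tilde{\Delta}$ together with $\mathbf{\tilde{D}}=\ ^{F}\mathbf{D-}\ ^{F}\mathbf{\tilde{Z}}$, the prescribed $\tilde{f}$--equation is the conjugate heat--type equation for the metric compatible $\mathbf{\tilde{D}}$. Differentiating the integrand of $\mathcal{\tilde{W}}(\mathbf{g}(\chi ),\tilde{f}(\chi ),\tilde{\tau}(\chi ))$ in $\chi$ and substituting the flows $\partial _{\chi }g_{ij}=-2(\ ^{F}R_{\ ij}-\ ^{F}\mathbf{\tilde{Z}}ic_{ij})=-2\tilde{R}_{ij}$ (and the analogous $v$--block), this $\tilde{f}$--equation, and $\partial _{\chi }\tilde{\tau}=-1$, the evolution of $\tilde{f}$ both preserves $\int_{\mathcal{V}}e^{-\tilde{f}}dV=const$ and cancels the non--square terms after integration by parts. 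Moving $\mathbf{\tilde{D}}_{\alpha }$ across the weighted measure $(4\pi \tilde{\tau})^{-n}e^{-\tilde{f}}dV$ and reassembling the Laplacian contributions rely essentially on $\mathbf{\tilde{D}\tilde{g}}=0$, which is exactly why the Cartan (metric compatible) d--connection, not a metric noncompatible $\ ^{B}\mathbf{D}$ or $\ ^{Ch}\mathbf{D}$, must serve as the background. Collecting terms, the integrand organizes into the perfect square built from $\tilde{R}_{\alpha \beta }+\mathbf{\tilde{D}}_{\alpha }\mathbf{\tilde{D}}_{\beta }\tilde{f}-\frac{1}{2\tilde{\tau}}\mathbf{\tilde{g}}_{\alpha \beta }$.

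Third, I would pass to the distorted objects. Using $\tilde{R}_{\alpha \beta }=\ ^{F}\mathbf{R}_{\alpha \beta }-\ ^{F}\mathbf{\tilde{Z}}ic_{\alpha \beta }$ (from (\ref{districci}) and (\ref{nhdt21})) together with $\mathbf{\tilde{D}}_{\alpha }=\ ^{F}\mathbf{D}_{\alpha }-\ ^{F}\mathbf{\tilde{Z}}_{\alpha }$, the square turns into the stated expression with $\ ^{F}\mathbf{R}_{\alpha \beta }-\ ^{F}\mathbf{\tilde{Z}}ic_{\alpha \beta }+(\ ^{F}\mathbf{D}_{\alpha }-\ ^{F}\mathbf{\tilde{Z}}_{\alpha })(\ ^{F}\mathbf{D}_{\beta }-\ ^{F}\mathbf{\tilde{Z}}_{\beta })\tilde{f}-\frac{1}{2\tilde{\tau}}\mathbf{\tilde{g}}_{\alpha \beta }$, reproducing the formula in the statement. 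Nonnegativity of $\partial _{\chi }\mathcal{\tilde{W}}$ follows since $\tilde{\tau}>0$, the weight is positive and the integrand is a squared norm, but only when the contraction with $\mathbf{g}^{\alpha \beta }$ retains the symmetric $h$-- and $v$--blocks; this forces the off--diagonal conditions $\ ^{F}R_{ia}=\ ^{F}\mathbf{\tilde{Z}}ic_{ia}$ and $\ ^{F}R_{ai}=\ ^{F}\mathbf{\tilde{Z}}ic_{ai}$, exactly as in Theorem \ref{2theq1}, and yields the final clause that $\mathcal{\tilde{W}}$ is nondecreasing iff it is both $h$-- and $v$--nondecreasing.

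The step I expect to be the main obstacle is the integration--by--parts bookkeeping in the N--adapted frames (\ref{dder})--(\ref{ddif}): these frames are nonholonomic, with nonzero anholonomy coefficients $\tilde{W}_{\alpha \beta }^{\gamma }$, and they themselves evolve with $\chi$, so the commutators $[\mathbf{\tilde{e}}_{\alpha },\mathbf{\tilde{e}}_{\beta }]$ generate extra terms, while the Ricci d--tensor is generically nonsymmetric. Verifying that these corrections cancel against the chosen $\tilde{f}$-- and $\tilde{\tau}$--evolutions, and that the remainder left after forming the perfect square is an exact divergence (hence integrates to zero over $\mathcal{V}$), is the delicate part; it is controlled precisely by the metric compatibility $\mathbf{\tilde{D}\tilde{g}}=0$ and by restricting the flow to its symmetric $h$-- and $v$--sectors.
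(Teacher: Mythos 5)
Your proposal follows essentially the same route as the paper's proof: both rest on the established metric compatible $\mathcal{W}$--monotonicity written for the Cartan d--connection $\mathbf{\tilde{D}}$ (the N--adapted version of Perelman's computation, Proposition 1.5.8 of \cite{caozhu}, as proved in \cite{vric4,vric6}), and then insert the distortion relations $\mathbf{\tilde{D}}=\ ^{F}\mathbf{D}-\ ^{F}\mathbf{\tilde{Z}}$, $\tilde{\Delta}=\ ^{F}\Delta+\ ^{Z}\tilde{\Delta}$, $\mathbf{\tilde{R}}_{\ \beta \gamma }=\ ^{F}\mathbf{R}_{\ \beta \gamma }-\ ^{F}\mathbf{\tilde{Z}}ic_{\beta \gamma }$ to recast both the flow equations and the monotonicity integrand in terms of $\ ^{F}\mathbf{D}$. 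The only cosmetic difference is that the paper reaches $\mathbf{\tilde{D}}$ by rescaling the known result for the canonical d--connection $\widehat{\mathbf{D}}$, whereas you anchor it directly in the Levi--Civita computation for $\tilde{\nabla}$; this does not alter the argument.
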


\begin{proof}
We apply  a proof
with N--adapted modification of Proposition 1.5.8 in \cite{caozhu}
containing the details of the original result from \cite{gper1}). For metric
compatible Lagrange and/or Finsler flows, there are proofs  \cite%
{vric4,vric6} that  for $\widehat{\mathbf{D}},$ the equations
$$
\frac{\partial g_{ij}}{\partial \chi } =-2\widehat{R}_{ij},\ \frac{%
\partial g_{ab}}{\partial \chi }=-2\widehat{R}_{ab},\
\ \frac{\partial \widehat{f}}{\partial \chi } = -\widehat{\Delta }\widehat{f%
}+\left| \widehat{\mathbf{D}}\widehat{f}\right| ^{2}-\ _{s}\widehat{R}+\frac{%
2n}{\hat{\tau}}, \  \frac{\partial \hat{\tau}}{\partial \chi } = -1$$
 result in the condition
 $\frac{\partial }{\partial \chi }\widehat{\mathcal{W}}(\mathbf{g}(\chi )%
\mathbf{,}\widehat{f}(\chi ),\hat{\tau}(\chi ))=2\int_{\mathcal{V}}\hat{\tau}%
[|\widehat{R}_{ij}+\widehat{D}_{i}\widehat{D}_{j}\widehat{f}-\frac{1}{2\hat{%
\tau}}g_{ij}|^{2}+ |\widehat{R}_{ab}+\widehat{D}_{a}\widehat{D}_{b}\widehat{f}-\frac{1}{2\hat{%
\tau}}g_{ab}|^{2}](4\pi \hat{\tau})^{-n}e^{-\widehat{f}}dV.$
 We write  for $\mathbf{%
\tilde{D},}$ rescaling correspondingly the functions $\widehat{f}(\chi
)\rightarrow \tilde{f}(\chi ),\hat{\tau}(\chi )\rightarrow $ $\tilde{\tau}%
(\chi ),$
$$\frac{\partial g_{ij}}{\partial \chi } = -2\tilde{R}_{ij},\ \frac{\partial
g_{ab}}{\partial \chi }=-2\tilde{R}_{ab},\
\ \frac{\partial \tilde{f}}{\partial \chi } = -\tilde{\bigtriangleup}\tilde{%
f}+\left| \mathbf{\tilde{D}}\tilde{f}\right| ^{2}-\ _{s}\tilde{R}+\frac{2n}{%
\tilde{\tau}},\frac{\partial \tilde{\tau}}{\partial \chi }=-1$$
 and (for another functional, $\mathcal{\tilde{W})}$
$ \frac{\partial }{\partial \chi }\mathcal{\tilde{W}}(\mathbf{g}(\chi )\mathbf{%
,}\tilde{f}(\chi ),\tilde{\tau}(\chi ))=2\int_{\mathcal{V}}\tilde{\tau}[|%
\mathbf{\tilde{R}}_{\alpha \beta }+\mathbf{\tilde{D}}_{\alpha }\mathbf{%
\tilde{D}}_{\beta }\tilde{f}-\frac{1}{2\tilde{\tau}}\mathbf{\tilde{g}}%
_{\alpha \beta }|^{2}](4\pi \tilde{\tau})^{-n}e^{-\tilde{f}}dV.$

In the above formulas, we introduce the distorting relations
\begin{eqnarray}
\tilde{\nabla}&=&\ ^{F}\mathbf{D-}\ _{\nabla }^{F}\mathbf{Z,\ }\widehat{%
\mathbf{D}}=\ ^{F}\mathbf{D-}\ ^{F}\widehat{\mathbf{Z}},\ \mathbf{\tilde{D}}%
=\ ^{F}\mathbf{D-}\ ^{F}\mathbf{\tilde{Z}},   \label{dist3a}\\
\tilde{\Delta} &=&\mathbf{\tilde{D}}_{\alpha }\ \mathbf{\tilde{D}}^{\alpha
}=\ ^{F}\Delta +\ ^{Z}\tilde{\Delta},  \label{dist3b} \\
\ ^{F}\Delta  &=&\ ^{F}\mathbf{D}_{\alpha }\ ^{F}\mathbf{D}^{\alpha },\ \
^{Z}\tilde{\Delta}=\ ^{F}\mathbf{\tilde{Z}}_{\alpha }\ ^{F}\mathbf{\tilde{Z}}%
^{\alpha }-[\ ^{F}\mathbf{D}_{\alpha }(\ ^{F}\mathbf{\tilde{Z}}^{\alpha })+\
^{F}\mathbf{\tilde{Z}}_{\alpha }(\ ^{F}\mathbf{D}^{\alpha })];  \nonumber \\
\mathbf{\tilde{R}}_{\ \beta \gamma } &=&\ ^{F}\mathbf{R}_{\ \beta \gamma }-\
^{F}\mathbf{\tilde{Z}}ic_{\beta \gamma },\ \ _{s}\tilde{R}=\ _{s}^{F}R-%
\mathbf{g}^{\beta \gamma }\ ^{F}\mathbf{\tilde{Z}}ic_{\beta \gamma }=\
_{s}^{F}R-\ _{s}^{F}\mathbf{\tilde{Z}},  \nonumber \\
\ _{s}^{F}\mathbf{\tilde{Z}} &=&\mathbf{g}^{\beta \gamma }\ ^{F}\mathbf{%
\tilde{Z}}ic_{\beta \gamma }=\ _{h}^{F}\tilde{Z}+\ _{v}^{F}\tilde{Z},\
_{h}^{F}\tilde{Z}=g^{ij}\ ^{F}\mathbf{\tilde{Z}}ic_{ij},\ _{v}^{F}\tilde{Z}%
=h^{ab}\ ^{F}\mathbf{\tilde{Z}}ic_{ab};  \nonumber \\
\ _{s}^{F}R &=&\ _{h}^{F}R+\ _{v}^{F}R,\ \ _{h}^{F}R:=g^{ij}\ ^{F}R_{ij},\
_{v}^{F}R=h^{ab}\ ^{F}R_{ab},  \nonumber
\end{eqnarray}%
resulting in the equations from the conditions of theorem.
\end{proof}

Ricci flows with  $\nabla ,$ $\mathbf{\ }\widehat{\mathbf{D}}$ and $\mathbf{%
\tilde{D}}$ are characterized by respective thermodynamic values, see
section 5 in \cite{gper1} and, for metric compatible Finsler spaces, Refs. %
\cite{vric4,vric6}. Such constructions can be noholonomically deformed into metric noncompatible configurations.

In order to provide a statistical analogy, we consider a partition function $%
Z=\int \exp (-\beta E)d\omega (E)$ for the canonical ensemble at temperature
$\beta ^{-1}$ being defined by the measure taken to be the density of states $%
\omega (E).$ The thermodynamical values are computed in standard  form for
the  average energy, $\ \left\langle E\right\rangle :=-\partial \log
Z/\partial \beta ,$ the entropy $S:=\beta \left\langle E\right\rangle +\log Z
$ and the fluctuation $\sigma :=\left\langle \left( E-\left\langle
E\right\rangle \right) ^{2}\right\rangle =\partial ^{2}\log Z/\partial \beta
^{2}.$

\begin{theorem}
\label{theveq} Any family of Finsler geometries for which the conditions of
Theorem \ref{2theveq} are satisfied is characterized by thermodynamic values
{\small
\begin{eqnarray}
\left\langle\ ^{F}E\right\rangle &=&-\tilde{\tau}^{2}\int_{\mathcal{V}%
} (\ _{s}^{F}R+|\ ^{F}\mathbf{D}\tilde{f}|^{2}-\frac{n}{\widehat{\tau }}) \tilde{\mu}\ dV, \label{thermodv} \\
\ ^{F}S &=& -\int_{\mathcal{V}} [\tilde{\tau} (\ _{s}^{F}R+|\ ^{F}%
\mathbf{D}\tilde{f}|^{2}) +\tilde{f}-2n ] \tilde{\mu}\ dV, \nonumber \\
&&\ ^{F}\sigma = 2\ \tilde{\tau}^{4}\int_{\mathcal{V}}[\ ^{F}\mathbf{R}%
_{\alpha \beta }-\ ^{F}\mathbf{\tilde{Z}}ic_{\alpha \beta }+
 (\ ^{F}\mathbf{D}_{\alpha }-\ ^{F}\mathbf{\tilde{Z}}_{\alpha })(\
^{F}\mathbf{D}_{\beta }-\ ^{F}\mathbf{\tilde{Z}}_{\beta })\tilde{f}\\ 
&&-%
\frac{1}{2\tilde{\tau}}\mathbf{\tilde{g}}_{\alpha \beta }|^{2}]\tilde{\mu}\
dV \nonumber
\end{eqnarray}
}
\end{theorem}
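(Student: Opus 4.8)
The plan is to reproduce Perelman's statistical derivation (Section 5 of \cite{gper1}, with the N--adapted modifications worked out for metric compatible Finsler flows in \cite{vric4,vric6}) for the Cartan d--connection $\mathbf{\tilde{D}}$, and then to transfer the result to the general $\ ^{F}\mathbf{D}$ by means of the distortion relations (\ref{dist3a}) and (\ref{dist3b}) already established in Theorem \ref{2theveq}. First I would introduce a partition function $Z$ through $\log Z=\int_{\mathcal{V}}(-\tilde{f}+n)\tilde{\mu}\,dV$, with inverse temperature $\beta=1/\tilde{\tau}$ and measure $\tilde{\mu}=(4\pi \tilde{\tau})^{-n}e^{-\tilde{f}}$. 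This is the natural normalization for $\dim \mathcal{V}=2n$ and is compatible with the constraint $\int_{\mathcal{V}}\tilde{\mu}\,dV=1$ used throughout.

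Next I would compute the average energy $\left\langle E\right\rangle=-\partial \log Z/\partial \beta=\tilde{\tau}^{2}\,\partial \log Z/\partial \tilde{\tau}$, using the evolution system of Theorem \ref{2theveq} --- in particular $\partial \tilde{\tau}/\partial \chi=-1$ together with the flow equations for $\tilde{f}$, $g_{ij}$ and $g_{ab}$. Differentiating under the integral sign and integrating by parts over the compact region $\mathcal{V}$ (the boundary contributions drop since $\int_{\mathcal{V}}e^{-\tilde{f}}\,dV$ is held constant) produces the scalar curvature $\ _{s}\tilde{R}$ and the gradient term $|\mathbf{\tilde{D}}\tilde{f}|^{2}$, which after rescaling yields the first formula. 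The entropy then follows directly from $S=\beta \left\langle E\right\rangle+\log Z$: substituting the expressions for $\left\langle E\right\rangle$ and $\log Z$ and using $\int_{\mathcal{V}}\tilde{\mu}\,dV=1$, the integrand reorganizes into exactly $-\mathcal{\tilde{W}}$, which reproduces the stated $\ ^{F}S$ and confirms the ``minus entropy'' interpretation of the $\mathcal{W}$--functional (\ref{2npf2}).

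The most delicate step is the fluctuation $\sigma=\partial^{2}\log Z/\partial \beta^{2}$, equivalently $\partial \left\langle E\right\rangle/\partial \beta$. Here I would differentiate the average--energy integrand once more along the flow; the crucial point --- exactly as in the monotonicity computation already stated in Theorem \ref{2theveq} --- is that the resulting expression collapses into the perfect square $|\mathbf{\tilde{R}}_{\alpha \beta}+\mathbf{\tilde{D}}_{\alpha}\mathbf{\tilde{D}}_{\beta}\tilde{f}-\frac{1}{2\tilde{\tau}}\mathbf{\tilde{g}}_{\alpha \beta}|^{2}$. This is not an algebraic coincidence but a consequence of the precise coupling among the metric flow, the $\tilde{f}$--equation and the $\tilde{\tau}$--rescaling prescribed in Theorem \ref{2theveq}; since the very same square already appears in the formula for $\partial \mathcal{\tilde{W}}/\partial \chi$ stated there, I would reuse that identity rather than re--derive it from scratch.

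Finally, I would pass from $\mathbf{\tilde{D}}$ to the general, possibly metric noncompatible, $\ ^{F}\mathbf{D}$ by substituting the distortions $\mathbf{\tilde{D}}=\ ^{F}\mathbf{D}-\ ^{F}\mathbf{\tilde{Z}}$, $\tilde{\Delta}=\ ^{F}\Delta+\ ^{Z}\tilde{\Delta}$, $\mathbf{\tilde{R}}_{\beta \gamma}=\ ^{F}\mathbf{R}_{\beta \gamma}-\ ^{F}\mathbf{\tilde{Z}}ic_{\beta \gamma}$ and $\ _{s}\tilde{R}=\ _{s}^{F}R-\ _{s}^{F}\mathbf{\tilde{Z}}$ from (\ref{dist3a})--(\ref{dist3b}). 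Because these distortions are uniquely determined by $F$ and $\mathbf{\tilde{g}}$, the replacement is purely mechanical and turns the three expressions into precisely $\left\langle \ ^{F}E\right\rangle$, $\ ^{F}S$ and $\ ^{F}\sigma$ of the statement. The main obstacle I anticipate is bookkeeping rather than conceptual: keeping the $h$-- and $v$--components and the nonsymmetric part of the Ricci d--tensor organized so that only the symmetric, evolution--relevant components contribute to the final integrals, and verifying that all the total--derivative terms generated by the successive integrations by parts genuinely vanish on $\mathcal{V}$.
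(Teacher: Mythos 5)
Your proposal is correct and follows essentially the same route as the paper: introduce the partition function $\log \tilde{Z}=\int_{\mathcal{V}}(-\tilde{f}+n)\,\tilde{\mu}\,dV$, obtain the energy, entropy and fluctuation for the metric compatible Cartan d--connection $\mathbf{\tilde{D}}$ by the Perelman statistical computation (which the paper simply cites from \cite{gper1,caozhu,vric4,vric6}), and then transfer everything to $\ ^{F}\mathbf{D}$ by mechanically substituting the distortions (\ref{dist3a})--(\ref{dist3b}). The only difference is that you spell out the $\beta$--differentiation and integration--by--parts details that the paper leaves to the cited references.
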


\begin{proof}
There are two possibilities to prove this theorem. The first one is to use
the partition function $\breve{Z}=\exp \left\{ \int_{\mathcal{V}}[-\breve{f}+n]~\breve{\mu}%
dV\right\}$
and compute values (\ref{thermodv}) using methods from \cite{gper1,caozhu},
changing $\nabla \rightarrow \ ^{F}\mathbf{D}$  and rescaling $\breve{f}\rightarrow \tilde{f}$ and $%
\breve{\tau}\rightarrow \tilde{\tau}$ (such a rescaling is useful if we
wont to compare thermodynamical values for different Finsler connections). A
similar proof is possible if metric compatible Finsler connections are used.
For instance, considering  $\mathbf{\tilde{D}}\rightarrow \ ^{F}\mathbf{D}$
and $\tilde{Z}=\exp \left\{ \int_{\mathcal{V}}[-\tilde{f}+n]~\tilde{\mu}%
dV\right\} ,$ we compute \cite{vric4,vric6}
{\small
\begin{eqnarray*}
\left\langle \tilde{E}\right\rangle  &=&-\tilde{\tau}^{2}\int_{\mathcal{V}%
}\left( \ _{s}\tilde{R}+|\mathbf{\tilde{D}}\tilde{f}|^{2}-\frac{n}{\widehat{%
\tau }}\right) \tilde{\mu}\ dV,\\ \tilde{S} &=&-\int_{\mathcal{V}}\left[ \tilde{\tau}\left( \ _{s}\tilde{R}+|%
\mathbf{\tilde{D}}\tilde{f}|^{2}\right) +\tilde{f}-2n\right] \tilde{\mu}\ dV,
\\
\tilde{\sigma} &=&2\ \tilde{\tau}^{4}\int_{\mathcal{V}}[|\mathbf{\tilde{R}}%
_{\alpha \beta }+\mathbf{\tilde{D}}_{\alpha }\mathbf{\tilde{D}}_{\beta }%
\tilde{f}-\frac{1}{2\tilde{\tau}}\mathbf{\tilde{g}}_{\alpha \beta }|^{2}]%
\tilde{\mu}\ dV.
\end{eqnarray*}%
}
Introducing distortions (\ref{dist3a}) and (\ref{dist3b}) into the
thermodynamical values for $\mathbf{\tilde{D}}$, we generate analogous
thermodynamical values (\ref{thermodv}) for $\ ^{F}\mathbf{D.}$  \
 \end{proof}
The theorems and conclusions provided in this section can be formulated
and proved separately on $h$- and $v$--subspaces of a nonholonomic manifold $%
\mathbf{V}$ and/or a tangent bundle $TM.$ Some geometric and physical models with the Akbar-Zadeh curvature or other ''preferred'' Finsler connection (Berwald, Chern types etc) can be more/less/equivalent to alternative ones, but generated by the same $F.$ An exact answer is possible if a value $F$ is fixed following certain geometric/physical arguments.


\begin{thebibliography}{99}
\bibitem{ham1} R. S. Hamilton,  J. Diff. Geom. \textbf{17} (1982) 255--306.

\bibitem{ham2} R. S. Hamilton,  in \textit{Surveys in Differential Geometry}, Vol. 2 (International
Press, 1995), pp. 7--136.

\bibitem{gper1} G. Perelman, The entropy formula for the Ricci flow and its
geometric applications, arXiv:\ math.DG/ 0211159.

\bibitem{gper2} G. Perelman, Ricci flow with surgery on three--manifolds,
arXiv:\ math. DG/ 0303109.

\bibitem{gper3} G. Perelman, Finite extinction time for the solutions to the
Ricci flow on certain three--manifolds, arXiv:\ math.DG/ 0307245.

\bibitem{caozhu} H. -D. Cao and X. -P. Zhu,  Asian J.
Math., \textbf{10} (2006) 165--495.

\bibitem{kleiner} B. Kleiner and J. Lott,  Geometry \& Topology \textbf{ 12 }  (2008) 2587--2855.

\bibitem{rbook} J. W. Morgan and G. Tian, Ricci flow and the Poincar\'{e}
conjecture, American Mathematical Society, Clay Mathematics Monographs, vol. \textbf{3} (2007).

\bibitem{bao1} D. Bao,  in: Finsler Geometry. In memory of M. Matsumoto. Proceedings of
the 40th symposium on Finsler geometry, Sapporo, September 6-10, 2005;
editors: S. V. Sabau et all. Tokyo, Mathematical Society of Japan; Advanced
Studies in Pure Mathematics \textbf{48} (2007) 19-71.

\bibitem{vrev1} S. Vacaru,  Int. J. Geom. Meth. Mod. Phys. \textbf{5} (2008) 473-511.

\bibitem{vcrit} S. Vacaru,  Phys. Lett. B \textbf{690}
(2010) 224-228.

\bibitem{vaxiom} S. Vacaru, Principles of Einstein-Finsler Gravity and
Perspectives in Modern Cosmology, arXiv: 1004.3007.

\bibitem{vsgg} Clifford and Riemann- Finsler Structures in Geometric
Mechanics and Gravity, Selected Works, by S. Vacaru, P. Stavrinos, E.
Gaburov and D. Gon\c{t}a. Differential Geometry -- Dynamical Systems,
Monograph 7 (Geometry Balkan Press, 2006);\
www.mathem.pub.ro/dgds/mono/va-t.pdf and gr-qc/0508023.

\bibitem{cartan} E. Cartan, Les Espaces de Finsler (Paris, Herman, 1935).

\bibitem{matsumoto} M. Matsumoto, Foundations of Finsler Geometry and
Special Finsler Spaces (Kaisisha: Shigaken, 1986).

\bibitem{ma} R. Miron and M. Anastasiei, The Geometry of Lagrange Spaces:\
Theory and Applications, FTPH no. \textbf{59} (Kluwer Academic Publishers,
Dordrecht,  1994).

\bibitem{bejancu} A. Bejancu, Finsler Geometry and Applications (Ellis
Horwood, Chichester, 1990)

\bibitem{bcs} D. Bao, S. -S. Chern, and Z. Shen, An Introduction to
Riemann--Finsler Geometry. Graduate Texts in Math., 200 (Springer--Verlag,
2000).

\bibitem{vric1} S. Vacaru,  Int. J. Mod. Phys. \textbf{A 21} (2006) 4899-4912.

\bibitem{vric2} S. Vacaru and M. Visinescu,  Int. J. Mod. Phys. \textbf{A22 } (2007) 1135-1159.

\bibitem{vric3} S. Vacaru, Nonholonomic Ricci Flows: I. Riemann Metrics and Lagrange-Finsler Geometry, math.DG/0612162.

\bibitem{vric4} S. Vacaru, J. Math. Phys. \textbf{\ 49} (2008) 043504.

\bibitem{vric5} S. Vacaru,  Int. J. Theor. Phys. \textbf{\ 48 } (2009) 579-606.

\bibitem{vric6} S. Vacaru,  Rep. Math. Phys. \textbf{\ 63 } (2009) 95-110.

\bibitem{vric7} S. Vacaru,  J. Math. Phys. \textbf{\ 50 }
(2009) 073503.

\bibitem{vric8} S. Vacaru,
Electronic Journal of Theoretical Physics (EJTP) \textbf{6}, N21 (2009)
63-93.

\bibitem{vric9} S. Vacaru, Fractional Nonholonomic Ricci Flows, arXiv:
1004.0625.

\bibitem{vric10} S. Vacaru, Diffusion and Self-Organized Criticality in
Ricci Flow Evolution of Einstein and Finsler Spaces, arXiv: 1010.2021.

\bibitem{peyghan} A. Tayebi and E. Peyghan, Finsler Surface with Ricci Flow Equation [unpublished].

\bibitem{akbar} H. Akbar--Zadeh,  J. Geom. Phys. \textbf{17} (1995) 342--380.

\bibitem{riemann} B. Riemann, \"{U}ber die Hypothesen, welche der Geometries zugrunde liegen. Habilitationsvortrag 1854. Ges. math. Werke, 272--287 (Leipzig, 1892); Reproduced: (Dover Publications, 1953).

\bibitem{ehresmann} C. Ehresmann,  Coloque de Topologie, Bruxelles (1955) 29--55.

\bibitem{kawaguchi1} A. Kawaguchi,  Akad. Wetensch, Amsterdam, Proc. \textbf{40} (1937)
596--601.

\bibitem{kawaguchi2} A. Kawaguchi,
I, II, Tensor, N. S. \textbf{2} (1952) 123--142; \textbf{6} (1956) 165--199.

\bibitem{berwald} L. Berwald,   Ann. Math.\textbf{\ 42}%
, No. 2 (1941) 84--122.

\bibitem{vex3} S. Vacaru,  Int. J.
Geom. Meth. Mod. Phys. \textbf{8} (2011) 9--21.

\bibitem{vexsol} S. Vacaru,  Int. J. Theor. Phys. \textbf{49} (2010) 884-913.

\bibitem{vdefqfl} S. Vacaru,  J. Math. Phys. \textbf{48} (2007) 123509.

\bibitem{peyghan1} E. Peyghan and A. Tayebi,  Nonlinear Anal., Real World Appl. \textbf{\ 11 } (2010)
3021--3030.

\bibitem{peyghan3} E. Peyghan and  A. Tayebi,  J. Math. Phys. \textbf{51} (2010) 022904.

\bibitem{peyghan4} E. Peyghan and A. Heyari,  Int. Math.
Forum, \textbf{5} (2010) 145--153.

\bibitem{chern} S. Chern,  Sci. Rep. Nat. Tsing Hua Univ. Ser. A. \textbf{5} (1948)
95--121; or Selected Papers, vol. II, 194 (Springer, 1989).

\bibitem{bao2} D. Bao and C. Robles,  In: A Sample of Riemann--Finsler Geometry, Mathematical Sciences
Research Institute Publications, vol. 50 (Cambridge University Press, 2004)
197--259.

\bibitem{misner} C. W. Misner, K. S. \ Thorne and J. A. Wheeler, Gravitation
(Freeman, 1973).

\bibitem{wald} R. M. Wald, General Relativity (University of Chicago Press,
1984).
\end{thebibliography}
\end{document}